\documentclass[a4paper,UKenglish, numberwithinsect]{lipics-v2019}

\bibliographystyle{plainurl}

\title{Two-way Parikh Automata}



\author{Emmanuel~Filiot\footnote{E.~Filiot is a research associate of F.R.S.-FNRS.
		He is supported by the ARC Project
		Transform F\'ed\'eration Wallonie-Bruxelles and the
		FNRS CDR project J013116F.}
}{Universit\'e libre de Bruxelles }{}{}{}

\author{Shibashis~Guha\footnote{S.~Guha is supported by the ARC project “Non-Zero Sum Game Graphs: Applications to Reactive Synthesis and Beyond” ( F\'ed\'eration Wallonie-Bruxelles)}
}{Universit\'e libre de Bruxelles}{}{}{}

\author{Nicolas~Mazzocchi \footnote{N~Mazzocchi is a PhD student funded by a FRIA fellowship from the F.R.S.-FNRS.}
}{Universit\'e libre de Bruxelles}{}{}{}

\authorrunning{E.~Filiot and S.~Guha and N.~Mazzocchi}

\Copyright{Emmanuel~Filiot and Shibashis~Guha and Nicolas~Mazzocchi}



\keywords{Parikh automata, two-way automata, Presburger arithmetic}






\nolinenumbers





\usepackage{bbold} 
\usepackage{mathtools} 
\usepackage{xspace}
\usepackage{amsmath, amssymb, amsfonts, amsthm}
\usepackage{wasysym}
\usepackage{tikz}
\usetikzlibrary{automata, calc}




\newcommand{\N}{{\mathbb{N}}}		
\newcommand{\Z}{{\mathbb{Z}}}		
\newcommand{\Q}{\mathbb{Q}}			

\newcommand{\poly}{\ensuremath{\mathsf{poly}}}
\newcommand{\SIGMA}{\ensuremath{\mathsf{\Sigma}}}
\newcommand{\PI}{\ensuremath{\mathsf{\Pi}}}

\newcommand{\PTime}{\textsc{P}\xspace}
\newcommand{\NPTime}{\textsc{NP}\xspace}
\newcommand{\NPTimeC}{\textsc{NP-C}\xspace}
\newcommand{\coNPTime}{\textsc{coNP}\xspace}

\newcommand{\ExpTime}{\textsc{Exp}\xspace}
\newcommand{\NExpTime}{\textsc{NExp}\xspace}
\newcommand{\NExpTimeC}{\textsc{NExp-C}\xspace}
\newcommand{\coNExpTime}{\textsc{coNExp}\xspace}
\newcommand{\coNExpTimeC}{\textsc{coNExp-C}\xspace}
\newcommand{\NPSpace}{\textsc{NPSpace}\xspace}
\newcommand{\PSpace}{\textsc{PSpace}\xspace}
\newcommand{\PSpaceC}{\textsc{PSpace-C}\xspace}
\newcommand{\ExpSpace}{\textsc{ExpSpace}\xspace}

\newcommand{\PA}{{\textsf{PA}}}
\newcommand{\DFA}{{\textsf{DFA}}}
\newcommand{\UFA}{{\textsf{UFA}}}
\newcommand{\NFA}{{\textsf{FA}}}
\newcommand{\DPA}{{\textsf{DPA}}}
\newcommand{\UPA}{{\textsf{UPA}}}
\newcommand{\NPA}{{\textsf{PA}}}
\newcommand{\twoPA}{{\textsf{2PA}}}
\newcommand{\twoDFA}{{\textsf{2DFA}}}
\newcommand{\twoUFA}{{\textsf{2UFA}}}
\newcommand{\twoNFA}{{\textsf{2FA}}}
\newcommand{\twoDPA}{{\textsf{2DPA}}}

\newcommand{\twoNPA}{{\textsf{2PA}}}



\newcommand{\lword}{\ensuremath{{\vdash}}}
\newcommand{\rword}{\ensuremath{{\dashv}}}

\renewcommand{\P}{\mathfrak{P}} 
\newcommand{\proj}{\ensuremath{\text{proj}}} 
\newcommand{\range}{\ensuremath{\text{range}}} 
\renewcommand{\O}{\mathcal{O}} 

\newcommand{\new}{\mathrel{\raisebox{-.5ex}{\scalebox{0.9}{$\stackrel{\text{def}}{=}$}}}}

\newcommand{\C}{\mathcal{C}} 
\newcommand{\onleft}{\textsf{L}}
\newcommand{\onright}{\textsf{R}}

\begin{document}

\maketitle

	\begin{abstract}
          Parikh automata extend automata with counters whose values can only
          be tested at the end of the computation, with respect to
          membership into a semi-linear set. Parikh automata have
          found several applications, for instance in transducer
          theory, as they enjoy decidable emptiness problem.

          In this paper, we study two-way Parikh automata. We
          show that emptiness becomes undecidable in the
          non-deterministic case. However, it is \PSpaceC when the number of
          visits to any input position is bounded and
          the semi-linear set is given as an existential Presburger
          formula. We also give tight complexity bounds for the inclusion,
          equivalence and universality problems. Finally, we
          characterise precisely the complexity of those problems when
          the semi-linear constraint is given by an arbitrary
          Presburger formula. 
	\end{abstract}

	\input{figures}

	\section{Introduction} \label{sec:introduction}
		
\emph{Parikh automata}, introduced in~\cite{KlaRue03}, extend finite automata with counters in $\Z$ which can be incremented and decremented, but the counters can only be tested at the end of the computation, for membership in a semi-linear set (represented for instance as an existential Presburger formula).
More precisely, transitions are of the form $(q, \sigma, \vec{v}, q')$ where $q,q'$ are states, $\sigma$ is an input symbol and $\vec{v} \in \Z^d$ is a vector of dimension $d$.
A word $w$ is accepted if there exists a run $\rho$ on $w$ reaching an
accepting state and whose final vector (the component-wise sum of all
vectors along $\rho$) belongs to a given semi-linear set. Parikh automata strictly extend the expressive power of finite automata.
For example, the context-free language of words of the form $a^nb^n$ is definable by a deterministic Parikh automaton which checks membership in $a^*b^*$, counts the number of occurrences of $a$ and $b$, and at the end tests for equality of the counters, i.e.\ membership in the linear set $\{ (n,n)\mid n\in\N\}$.
They still enjoy decidable, \NPTimeC, non-emptiness problem~\cite{DBLP:conf/lics/FigueiraL15}.

Parikh automata ($\PA$) have found applications for instance in \emph{transducer} theory, in particular to the equivalence problem of functional transducers on words, and to check structural properties of transducers~\cite{patternLogic}, as well as in answering queries in graph databases~\cite{DBLP:conf/lics/FigueiraL15}.
Extensions of Parikh automata with a pushdown stack have been considered in~\cite{PPDA} with positive decidability results with respect to emptiness.
Two-way Parikh automata with a visibly pushdown stack have been considered in~\cite{DBLP:conf/fossacs/DartoisFT19} with applications to tree transducers.

In this paper, our objective is to study \emph{two-way Parikh automata} ($\twoPA$), the extension of $\PA$ with a two-way input head, where the semi-linear set is given by an existential Presburger formula.
For $\twoPA$ as well as subclasses such as deterministic $\twoPA$ ($\twoDPA$), we aim at characterizing the precise complexity of their decision problems (membership, emptiness, inclusion, equivalence), and analysing their expressiveness and closure properties.

\subparagraph{Contributions}
Since semi-linear sets are closed under all Boolean operations, it is easily seen that deterministic Parikh automata ($\DPA$) are closed under all Boolean operations.
More interestingly, it is also known that, while they strictly extend the expressive power of $\DPA$, \emph{unambiguous} $\PA$ ($\UPA$) are (non-trivially) closed under complement (as well as union and intersection)~\cite{cadilhac}.
We give here a simple explanation to these good closure properties: $\UPA$ \emph{effectively} correspond to $\twoDPA$.
Closure of $\twoDPA$ under Boolean operations indeed holds straightforwardly due to determinism.
The conversion of $\UPA$ to $\twoDPA$ is however non-trivial, but is obtained by the very same result on word transducers: it is known that unambiguous finite transducers are equivalent to two-way deterministic finite transducers~\cite{Chytil}, based on a construction by Aho, Hopcroft and Ullman~\cite{Aho1969}, recently improved by one exponential in~\cite{DBLP:conf/icalp/DartoisFJL17}.
Parikh automata can be seen as transducers producing sequences of vectors (the vectors occurring on their transitions), hence yielding the result.
The conversion of $\twoDPA$ to $\UPA$ is a standard construction based on \emph{crossing sections}, which however needs to be carefully analysed for complexity purposes.

The effective equivalence between $\twoDPA$ and $\UPA$ indeed entails decidability of the non-emptiness problem for $\twoDPA$.
However, given that non-emptiness of $\PA$ is known to be \NPTimeC~\cite{DBLP:conf/lics/FigueiraL15}, and the conversion of $\twoDPA$ to $\UPA$ is exponential, this leads to \NExpTime complexity.
By a careful analysis of this conversion and small witnesses properties of Presburger formulas, we show that emptiness of $\twoDPA$, and even \emph{bounded-visit} $\twoNPA$, is actually \PSpaceC.
 Bounded-visit $\twoNPA$ are non-deterministic $\twoNPA$ such that for some natural number $k$, each position of an input word $w$ is visited at most $k$ times by any accepting computation on $w$.
 In particular, $\twoDPA$ are always $n$-visit for $n$ the number of states.
 If the number $k$ of visits is a fixed constant, non-emptiness is then \NPTimeC, which entails complexity result of~\cite{DBLP:conf/lics/FigueiraL15} for (one-way) $\NPA$ (by taking $k=1$).
 We show that dropping the bounded-visit restriction however leads to undecidability.

Thanks to the closure properties of $\twoDPA$, we show that the inclusion, universality and equivalence problems are all \coNExpTimeC.
Those problems are known be undecidable for $\NPA$~\cite{KlaRue03}.
The membership problem of $\twoNPA$ turns out to be \NPTimeC, just as for (one-way) $\NPA$.
The \coNExpTime lower bound holds for one-way deterministic Parikh automata, a result which is also new, to the best of our knowledge.

Finally, we study the extension of two-way Parikh automata with a
semi-linear set defined by a $\SIGMA_i$-Presburger formula, i.e.\ a
formula with a \emph{fixed} number $i$ of unbounded blocks of
quantifiers where the consecutive blocks alternate $i{-}1$ times between existential and universal blocks, and the first block is existential.
We characterise tightly the complexity of the non-emptiness problem for bounded-visit $\SIGMA_i$-$\twoNPA$, as well as the universality, inclusion and equivalence problems for $\SIGMA_i$-$\twoDPA$, in the weak exponential hierarchy~\cite{Haase14}.
For $i>1$, we find that the complexity of these problems is dominated
by the complexity of checking satisfiability or validity of $\SIGMA_i$-Presburger formulas.
This is unlike the case $i=1$: the non-emptiness problem for bounded-visit $\twoNPA$ is \PSpaceC while satisfiability of $\SIGMA_1$-formulas is \NPTimeC.

\subparagraph*{Related work}
Parikh automata are known to be equivalent to reversal-bounded multicounter machines (\textsf{RBCM}) \cite{DBLP:journals/jacm/Ibarra78} in the sense that they describe the same class of languages~\cite{cadilhac}.
Two-way \textsf{RBCM} (\textsf{2RBCM}), even deterministic, are known to have undecidable emptiness problem~\cite{DBLP:journals/jacm/Ibarra78}.
While, using diophantine equations as in the case of \cite{DBLP:journals/jacm/Ibarra78}, we show that emptiness of $\twoNPA$ is undecidable, our decidability result for $\twoDPA$ contrasts with the undecidabilty of deterministic \textsf{2RBCM}.
The difference is that \textsf{2RBCM} can test their counters at any moment during a computation, and not only at the end.
Based on the fact that the number of reversals is bounded, deferring the tests at the end of the computation is always possible~\cite{DBLP:journals/jacm/Ibarra78} but non-determinism is needed.
Unlike $\twoDPA$, deterministic \textsf{2RBCM} are not necessarily bounded-visit.
A $\twoDPA$ can be seen as a deterministic \textsf{2RBCM} whose tests on counters are only done at the end of a computation.

Two-way Parikh automata on \emph{nested words} have been studied in~\cite{DBLP:conf/fossacs/DartoisFT19} where it is shown that under the \emph{single-use} restriction (a generalisation of the bounded-visit restriction to nested words), they have \NExpTimeC non-emptiness problem.
Bounded-visit $\twoNPA$ are a particular case of those Parikh automata operating on (non-nested) words.
Applying the result of~\cite{DBLP:conf/fossacs/DartoisFT19} to $\twoNPA$ would yield a non-optimal \NExpTime complexity for the non-emptiness problem, as it first goes through an explicit but exponential transformation into a one-way machine with known \NPTimeC non-emptiness problem.
Here instead, we rely on a small witness property, whose proof uses a transformation into one-way Parikh automaton, and then we apply a \PSpace algorithm performing on-the-fly the one-way transformation up to some bounded length.

Finally, the emptiness problem for the intersection of $n$ $\NPA$ was shown to be \PSpaceC in~\cite{DBLP:conf/lics/FigueiraL15}.
Our \PSpaceC result on $\twoNPA$ emptiness generalises this result, as the intersection of $n$ $\NPA$ can be simulated trivially by a (sweeping) $n$-bounded $\twoNPA$.
The main lines of our proof are similar to those in~\cite{DBLP:conf/lics/FigueiraL15}, but in addition, it needs a one-way transformation on top of the proof in~\cite{DBLP:conf/lics/FigueiraL15}, and a careful analysis of its complexity.


	\section{Two-way Parikh automata} \label{sec:model}
		Two-way Parikh automata are two-way automata extended with weight vectors and a semi-linear acceptance condition.
In this section, we first define two-way automata, semi-linear sets and then two-way Parikh automata.

\subparagraph{Two-way Automata}
A \emph{two-way finite automaton} ($\twoNFA$ for short) $A$ over an alphabet $\Sigma$ is a tuple $(Q, Q_I, Q_H, Q_F, \Delta)$ whose components are defined as follows.
We let $\lword$ and $\rword$ be two delimiters not in $\Sigma$, intended to represent the beginning and the end of the word respectively.
The set $Q$ is a non-empty finite set of states partitioned into the set of right-reading states $Q^\onright$ and the set of left-reading states $Q^\onleft$.
Then, $Q_I \subseteq Q^\onright$ is the set of initial states, $Q_H\subseteq Q$ is the set of halting states, and $Q_F \subseteq Q_H$ is the set of accepting states.
The states belonging to $Q_H \setminus Q_F$ are said to be \emph{rejecting}.
Finally, $\Delta \subseteq Q \times (\Sigma\cup\{ \lword,\rword\}) \times Q$ is the set of transitions.
 Intuitively, the reading head of $A$ is always placed in between input positions, a transition from $q \in Q^\onright$ (resp.\ $q \in Q^\onleft$) reads the input letter on the right (resp.\ left) of the head and moves the head one step to the right (resp.\ left).
 We also have the following restrictions on the behaviour of the head to keep it in between the boundaries $\lword$ and $\rword$ and to ensure the following properties on the initial and the halting states.
\begin{enumerate}
	\item no outgoing transition from a halting state:\\
	$(Q_H \times \left( \Sigma \cup \{ \lword, \rword \}\right) \times Q) \cap \Delta = \varnothing$
	\item the head cannot move left (resp.\ right) when it is to the
          left of $\lword$ (resp.\ right of $\rword$):\\
	$(Q^\onleft \times \{ \lword \} \times Q^\onleft) \cap \Delta
        = \varnothing$ (resp.\ $(Q^\onright \times \{ \rword \} \times ( Q^\onright \setminus Q_F)) \cap \Delta = \varnothing$)
	\item all transitions leading to a halting state $q_H$ read the delimiter $\rword$:\\
	$((q, a, q_H) \in \Delta \land q_H \in Q_H) \implies (q \in Q^\onright \land a = \rword)$
\end{enumerate}

A configuration $(u^\onleft, p, u^\onright) $ of $A$ on a word $u\in\Sigma^*$ consists of a state $p$ and two words $u^\onleft, u^\onright\in(\Sigma\cup \{\lword,\rword\})^*$ such that $u^\onleft u^\onright=\lword u \rword$.
A \emph{run} $\rho$ on a word $u \in \Sigma^*$ is a sequence $\rho = (u_0^\onleft, q_0, u_0^\onright)a_1(u_1^\onleft, q_1, u_1^\onright) \dots a_n (u_n^\onleft, q_n, u_n^\onright)$ alternating between configurations on $u$ and letters in $\Sigma \cup \{\lword,\rword\}$ such that for all $1 \leq i \leq n$, we have $(q_{i-1}, a_{i}, q_{i}) \in \Delta$, and for all $s\in\{ \onleft, \onright \}$, if $q_{i-1} \in Q^s$ then $|u_i^{s}| = |u_{i-1}^{s}|-1$.
The length of the run $\rho$, denoted $|\rho|$ is the number of letters appearing in $\rho$.
Here $|\rho| = n$.
The \emph{run} $\rho$ is \emph{halting} if $q_n\in Q_H$ (and hence $u_n^\onright=\varepsilon$ by condition~3), \emph{initial} if $u_0^\onleft = \varepsilon$ and $q_0\in Q_I$, \emph{accepting} if it is both initial and halting, and $q_n\in Q_F$; otherwise the run is \emph{rejecting}.
A word $u$ is accepted by $A$ if there exists an accepting run of $A$ on $\lword u \rword$, and the language $L(A)$ of $A$ is defined as the set of words it accepts.


An automaton $A$ is said to be \emph{one-way} (\NFA) if $Q^\onleft$ is empty.
A run $\rho$ is said to be \emph{$k$-visit} if every input position is visited at most $k$ times in the run $\rho$, i.e.\ for $\rho = (u_0^\onleft, q_0, u_0^\onright)\dots (u_n^\onleft, q_n, u_n^\onright)$, we have $\max \{ |P|\mid P\subseteq \{0,\dots,n\}\wedge \forall i,j\in P,\ u_i^\onleft = u_j^\onleft\} \leq k$.
$A$ is said to be $k$-visit if all its accepting runs are $k$-visit, and \emph{bounded-visit} if it is $k$-visit for some $k$.
Also, $A$ is said to be \emph{deterministic} if for all $p \in Q$ and all $a \in \Sigma\cup \{\lword, \rword\}$ there exists at most one $q \in Q$ such that $(p, a, q) \in \Delta$.
Finally, it is \emph{unambiguous} (denoted by the class $\twoUFA$ or $\UFA$ depending on whether it is two-way or one-way) if for every input word there exists at most one accepting run.
The following proposition is trivial but useful:
\begin{proposition} \label{prop:visit}
	Any bounded-visit $\twoNFA$ with $n$ states is $k$-visit for some $k \le n$.
\end{proposition}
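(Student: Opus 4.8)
The plan is to establish the following contrapositive-flavoured claim: if $A$ has $n$ states and some accepting run of $A$ visits some input position strictly more than $n$ times, then $A$ cannot be bounded-visit. Together with the trivial remark that an automaton all of whose accepting runs visit every position at most $n$ times is by definition $n$-visit, and $n\le n$, this proves the proposition. So assume $A$ is bounded-visit, fix a constant $k'$ such that every accepting run of $A$ is $k'$-visit, and suppose towards a contradiction that some accepting run $\rho = (u_0^\onleft, q_0, u_0^\onright)\, a_1\cdots a_m\, (u_m^\onleft, q_m, u_m^\onright)$ of $A$ on a word $u$ visits an input position more than $n$ times; concretely, the index set $P = \{\, t\in\{0,\dots,m\} : u_t^\onleft = w \,\}$ has size at least $n+1$ for some fixed word $w$.

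First I would apply the pigeonhole principle to the states $\{q_t : t\in P\}$: since $A$ has only $n$ states, there are two indices $i<j$ in $P$ with $q_i = q_j$. Because $u_i^\onleft u_i^\onright = \lword u \rword = u_j^\onleft u_j^\onright$ and $u_i^\onleft = w = u_j^\onleft$, one also gets $u_i^\onright = u_j^\onright$, so the configurations of $\rho$ at indices $i$ and $j$ coincide. Consequently the factor $\sigma = (u_i^\onleft, q_i, u_i^\onright)\, a_{i+1}\cdots a_j\, (u_j^\onleft, q_j, u_j^\onright)$ of $\rho$ starts and ends in the very same configuration, i.e. it is a genuine loop.

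Next I would pump this loop: for each $\ell\ge 1$, let $\rho_\ell$ be the sequence obtained from $\rho$ by replacing the factor $\sigma$ with the concatenation of $\ell$ copies of $\sigma$ (identifying the shared boundary configuration, which is the configuration at index $i$). Every elementary step of $\rho_\ell$ is an elementary step of $\rho$, hence it respects $\Delta$ and the head-movement conditions of a run; the points where successive copies are glued all carry that common configuration, so they are legal too; and every configuration of $\rho_\ell$ still splits $\lword u \rword$. Thus $\rho_\ell$ is a run of $A$ on $u$; it is initial and it ends in $q_m\in Q_F$, since its prefix before $\sigma$ and its suffix after $\sigma$ are untouched, so $\rho_\ell$ is accepting. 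Finally, each of the $\ell$ copies of $\sigma$ contributes, at least, the index that in $\rho$ was $i\in P$, so position $w$ is visited at least $\ell$ times in $\rho_\ell$. Choosing $\ell>k'$ contradicts the assumption that every accepting run is $k'$-visit, which completes the argument.

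This is essentially a pumping argument, and its only delicate point is the claim that each $\rho_\ell$ is a legitimate run: this reduces entirely to the equality of the two endpoint configurations of $\sigma$, which has been checked above, the rest being routine bookkeeping about the definition of a run. The main care needed is therefore notational — keeping the number of states $n$ separate from the run length $m$, and making the pigeonhole argument on the index set $P=\{t : u_t^\onleft = w\}$ precise — so I do not expect any substantive obstacle.
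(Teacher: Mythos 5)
Your proof is correct: the pigeonhole on the states at the $>n$ visits of a single position yields a repeated configuration, and pumping that loop contradicts bounded-visitness, which is exactly the standard argument the paper has in mind when it calls the proposition trivial (it gives no explicit proof). No gaps — the key point you rightly verify is that the two endpoints of the pumped factor are identical configurations, so each $\rho_\ell$ is a legitimate accepting run on the same word.
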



\subparagraph{Semi-linear Sets}
Let $d \in \N_{\neq 0}$.
A set $L \subseteq \Z^d$ of dimension $d$ is \emph{linear} if there exist $\vec{v}_0, \dots, \vec{v}_k \in \Z^d$ such that $L = \{ \vec{v}_0+\sum_{i=1}^k x_i \vec{v}_i \mid x_1, \dots, x_n\in \N\}$.
The vectors $(\vec{v}_i)_{1\leq i\leq k}$ are the \emph{periods} and $\vec{v}_0$ is called the \emph{base}, forming what we call a \emph{period-base representation} of $L$, whose size is $d \cdot (k+1) \cdot \log_2(\mu + 1 )$ where $\mu$ is the maximal absolute integer appearing on the vectors.
A set is \emph{semi-linear} if it is a finite union of linear sets.
A period-base representation of a semi-linear set is given by a period-base representation for each of the linear sets it is composed of, and its size is the sum of the sizes of all those representations.


Alternatively, a semi-linear set of dimension $d$ can be represented as the models of a Presburger formula with $d$ free variables.
A Presburger formula is a first-order formula built over terms $t$ on the signature $\{0, 1, +, \times_2\} \cup X$, where $X$ is a countable set of variables and $\times_2$ denotes the doubling (unary) function\footnote{The function $\times_2$ is syntactic sugar allowing us to have simpler binary encoding of values}.
In particular, Presburger formulas obey the following syntax:
$$
	\Phi \new t \leq t \mid \exists x\ \Phi \mid \Phi \land \Phi \mid \Phi \lor \Phi \mid \lnot \Phi
$$
The class of formulas of the form $\exists x_1, \forall x_2 \dots, \Omega_{i} x_{i} \left[\varphi\right]$ where $\varphi$ is quantifier free and $\Omega \in \{ \forall, \exists \}$ is denoted by $\SIGMA_i$.
In particular, $\SIGMA_1$ is the set of existential Presburger formulas.
The size $|\Psi|$ of a formula is its number of symbols.
We denote by $\vec{v}\models \varphi$ the fact that a vector $\vec{v}$
of dimension $d$ satisfies a formula $\varphi$ with $d$ free
variables, and that $\varphi$ is satisfiable is there exists such
$\vec{v}$. We say that $\varphi$ is valid if it is satisfied by any
$\vec{v}$. It is well-known~\cite{semilinear} that a set $S\subseteq \Z^d$ is semi-linear iff there exists an existential Presburger formula $\psi$ with $d$ free variables such that $S = \{ \vec{v}\mid \vec{v} \models \psi\}$.

Let $\Sigma = \{a_1, \dots, a_n\}$ be an alphabet (assumed to be ordered), and $u \in \Sigma^*$, the Parikh image of $u$ is defined as the vector $\P(u) = (|u|_{a_1}, \dots, |u|_{a_n})$ where $|u|_a$ denotes the number of times $a$ occurs in $u$.
The Parikh image of  language $L \subseteq \Sigma^*$ is $\P(L) = \{\P(u) | u \in L\}$.
Parikh's theorem states that the Parikh image of any context-free language is semi-linear.


\subparagraph{Two-way Parikh automata}
A two-way Parikh automaton ($\twoNPA$) of dimension $d\in \N$ over $\Sigma$ is a tuple $P = (A, \lambda, \psi)$ where $A = (Q, Q_I, Q_H, Q_F, \Delta)$ is a \twoNFA{} over $\Sigma$, $\lambda \colon \Delta \rightarrow \Z^d$ maps transitions to vectors, and $\psi$ is an \emph{existential} Presburger formula with $d$ free variables, and is called the \emph{acceptance constraint}.
The \emph{value} $V(\rho)$ of a run $\rho$ of $A$ is the sum of the vectors occurring on its transitions, with $V(\rho) = 0_{\Z^d}$ if $|\rho|=0$.
A word is accepted by $P$ if it is accepted by some accepting run $\rho$ of $A$ and $V(\rho)\models \psi$.
The language $L(P)$ of $P$ is the of words it accepts.
The automaton $P$ is said to be \emph{one-way}, \emph{two-way}, \emph{$k$-visit}, \emph{unambiguous} and \emph{deterministic} if its underlying automaton $A$ is so.
We define the representation size\footnote{Note that weight vectors are not memorized on transition but into a table and transition only carry a key of this table to refer the corresponding weight vectors} of $P$ as $|P| = |Q| + |\psi| + |\range(\lambda)|\big(d \log_2(\mu+1) + |Q|^2\big)$ where $\range(\lambda) = \{ \lambda(t) \mid t\in \Delta \}$ and $\mu$ is the maximal absolute entries appearing in weight vectors of $P$.
Finally two $\twoNPA$ are \emph{equivalent} if they accept the same language.


\subparagraph{Examples}
Let $\Sigma = \{ a, b, c, \#\}$ and for all $n\in\N$, let $L_n = \{
a^k\# u\mid u\in\{ b, c \}^*\land k = |\{ i\mid 1\leq i\leq
|u|-n\wedge u[i]\neq u[i+n]\}|\}$, i.e.\ $k$ is the number of
positions $i$ in $u$ such that the $i$th letter
$u[i]$ mismatches with $u[i+n]$.
For all $n$, $L_n$ is accepted by the $\twoDPA$ of
Fig.~\ref{fig:mismatch} which has $O(n)$ states, tagged with
$\textsf{R}$ or $\textsf{L}$ to indicate whether they are right- or
left-reading respectively. %
On a word $w$, the automaton starts by reading $a^k$ and increments
its counter to store the value $k$ (state $q^a$). 
Then, for the first $|u|-n$ positions $i$ of $u$, the automaton checks
whether $u[i]\neq u[i+n]$ in which case the counter is decremented. To
do so, it stores $\sigma = u[i]$ in its state, moves $n+1$ times to the
right (states $q_0,q_1^\sigma,\dots,q_n^\sigma$), checks whether
$u[i+n]\neq u[i]$ (transitions $q_n^\sigma$ to $p_1$) and decrements the
counter accordingly. Then, it moves $n$ times to the left (states
$p_1$ to $p_n$). Whenever it reads $\dashv$ from states $q_j^\sigma$,
$p_j$ or $q_0$, it moves to state $q_F$ and accepts if the counter is
zero. 


\begin{figure}[!ht]
		\centering
		\drawLn
	\caption{A $\twoDPA$ recognising $L_n = \{
a^k\# u\mid u\in\{ b, c \}^*\land k = |\{ i\mid 1\leq i\leq
|u|-n\wedge u[i]\neq u[i+n]\}|\}$}
	\label{fig:mismatch}
\end{figure}


Our second example shows how to encode multiplication.
The language $\{ a^n \# a^m\# a^{n \times m}\mid n,m \in \N \}$ is indeed definable by the $\twoNPA$ of Figure~\ref{fig:times} which has dimension $2$.
When reading a word of the form $a^n\#a^m\# a^\ell$, every accepting run makes $p$ passes over $a^n$ where $p$ is chosen non-deterministically by the choice made on state $q_1$ on reading $\#$. 
Along those $k$ passes, the automaton increments the first dimension whenever $a$ is read in a right-to-left pass.
It also counts the number of passes in the second dimension.
Thus, when entering state $q_2$, the sum of the vectors so far is $(np, p)$.
Then, on $a^m$, it decrements the second dimension and on $a^{\ell}$, it decrements the first dimension, and eventually checks that both the counters are equal to zero, which implies that $p = m$ and $\ell = n  p = n  m$.
Note that this automaton is not bounded-visit as its number of visits to any position of $a^n$ is arbitrary.

\begin{figure}[!ht]
	\centering
	\drawTimes
	\caption{A $\twoNPA$ recognising $\{ a^n \# a^m\# a^{n \times m}\mid n,m \in \N\}$} \label{fig:times}
\end{figure}



%
%


	\section{Relating two-way and one-way Parikh automata} \label{sec:two-way_to_one_way}

In this section, we provide an algorithm which converts a bounded-visit $\twoNPA$ into a $\NPA$ defining the same language, through a \emph{crossing section} construction.
This technique is folkloric in the literature (see Section~2.6 of \cite{DBLP:books/aw/HopcroftU79}) and has been introduced to convert a $\twoNFA$ into an equivalent $\NFA$.
Intuitively, the one-way automaton is constructed such that on each position $i$ of the input word, it guesses a tuple of transitions (called crossing section), triggered by the original two-way automaton at the same position $i$ and additionally checks a local validity between consecutive tuples (called \emph{matching} property).
A one-way automaton takes crossing sections as set of states.
Furthermore, the matching property is defined to ensure that the sequence of crossing sections which successively satisfy it, correspond to the sequence of crossing sections of an accepting two-way run.
Thanks to the commutativity of $+$, the order in which weights are
combined by the two-way automaton does not matter and therefore,
transitions of the one-way automaton are labelled by summing the
weights of transitions of the crossing section. Formally, we define a crossing section as follows:

\begin{definition}[crossing section]
	Let $k \in \N_{\neq 0}$.
	Consider a $k$-visit $\twoNPA$ $A$ over $\Sigma$ and $a \in \Sigma \cup \{ \lword, \rword \}$.
	An $a$-crossing section is a sequence $c = (p_1, a, q_1) \dots (p_\ell, a, q_\ell) \in \Delta^+$ such that $1 \leq \ell \leq k$, $p_1, q_\ell \in Q^\onright$ and for all $m \in \{ \onleft, \onright \}$, $p_i \in Q^m \implies p_{i+1} \notin Q^m$.
	We define the value of $c$ as $V(c) = \sum_{i=1}^\ell \lambda(p_i, a, q_i)$, and its length $|c| = \ell$.
	From the sequence $s = p_1q_2p_3\dots q_{\ell-1}p_{\ell}$, the $\onleft$-anchorage of $c$ is defined by $p_1f(q_2,p_3)\dots f(q_{\ell-1},p_{\ell})$ where $f(q_i, p_{i+1}) = \varepsilon$ if $q_i = p_{i+1}$ and $q_i \in Q^\onright$, otherwise $f(q_i, p_{i+1}) = q_ip_{i+1}$.
	The $\onright$-anchorage of $c$ is defined dually\footnote{From $s = q_1p_2\dots q_{\ell-2}p_{\ell-1}q_\ell$, we define $f(q_1p_2)\dots f(q_{\ell-2}p_{\ell-1})q_\ell$ where $f(q_i, p_{i+1}) = \varepsilon$ if $q_i = p_{i+1}$ and $q_i \in Q^\onleft$ otherwise $f(q_i, p_{i+1})$ is the identity}.
	Furthermore, $c$ is said to be \emph{initial} if its $\onleft$-anchorage is $p_1 \in Q_I$.
	Dually, $c$ is said to be \emph{accepting} if its $\onright$-anchorage is $q_\ell \in Q_F$.
\end{definition}

Given a run $\rho$ of a $\twoPA$ over $u$ and a position $1 \leq i
\leq |u|$, the \emph{crossing section of $\rho$ at position $i$} is
defined as the sequence of all transitions triggered by $\rho$ when
reading the $i$th letter, taken in the order of appearance in $\rho$. 
We also define the \emph{crossing section sequence} $\C(r)$ as the sequence of crossing sections of $\rho$ from position $1$ to $|u|$.
Note that the first crossing section is initial and the last crossing section of $\rho$ is accepting if $\rho$ is accepting.

\begin{figure}[!ht]
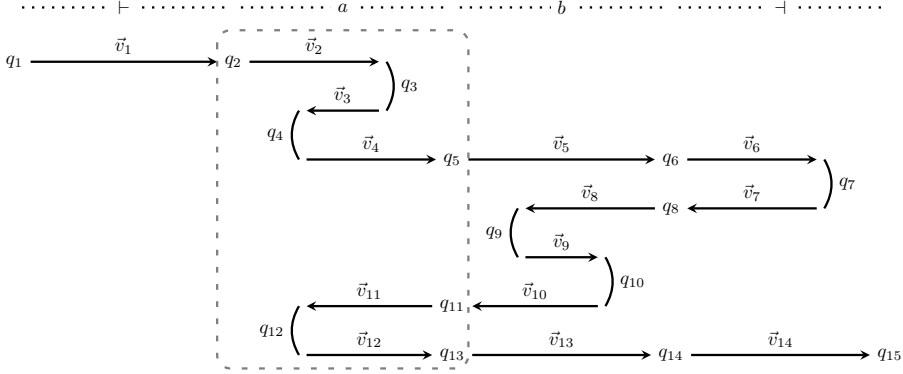

	\centering
	\drawSectionBis
	\caption{A $a$-crossing section of a run} \label{fig:match}
\end{figure}

\begin{example}
	Figure~\ref{fig:match}, shows a run over the word $\lword a b \rword$.
	Consider the $a$-crossing section $c = (q_2, a, q_3)(q_3, a, q_4)(q_4, a, q_5)(q_{11}, a, q_{12})(q_{12}, a, q_{13})$.
	We have that $\onleft$-anchorage of $c$ is $q_2f(q_4,q_4)f(q_{12},q_{12}) = q_2$, $\onright$-anchorage of $c$ is $f(q_3,q_3)f(q_{5},q_{11})q_{13} = q_5q_{11}q_{13}$ and
	$V(c) = \vec{v}_2 + \vec{v}_3 + \vec{v}_4 + \vec{v}_{11} + \vec{v}_{12}$.
	Note that, the states of the crossing section do not appear in the anchorage when the run changes its reading direction.
\end{example}


\begin{definition}[matching relation]
	Consider two crossing sections $c_1, c_2$ from the same automaton.
	The matching relation $M$ is defined such that $(c_1, c_2) \in M$ if the $\onright$-anchorage of $c_1$ equals the $\onleft$-anchorage of $c_2$.
\end{definition}


In general, an arbitrary sequence of crossing sections may not correspond to a run of a two-way automaton, that is 
a crossing section sequence $s = c_1, \dots, c_\ell$ such that $\C(r) \neq s$ for all run $\rho$.
Lemma~\ref{lem:merging} shows that the matching property ensures the existence of such a run $\rho$ in the two-way automaton.

\begin{lemma} \label{lem:merging}
	Consider $s = c_1, \dots, c_n$ where $c_i$ is an $a_i$-crossing section such that $c_1$ is initial, $c_n$ is accepting, and $(c_i, c_{i+1}) \in M$ for all $i \in \{1, \dots, n-1\}$.
	Then there exists an accepting two-way run $\rho$ over $a_1 \dots a_n$ such that $\C(\rho) = s$.
	Moreover, $V(r) = \sum_{i=1}^n V(c_i)$.
\end{lemma}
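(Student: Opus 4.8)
The plan is to reconstruct the accepting run $\rho$ directly from the sequence $s$ by a threading argument. Each $c_i$ already lists, in temporal order, exactly the transitions that $\rho$ must take while its head straddles the $i$-th letter; the matching relation is precisely the gluing condition that lets these local lists be assembled into one global run, and the commutativity of $+$ then yields the value identity for free.

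First I would fix the geometry. With the $i$-th letter $a_i$ associate the two head positions immediately to its left and to its right, called boundaries $i-1$ and $i$. Writing $c_i = t_1 \cdots t_{\ell_i}$ with $t_m = (p_m, a_i, q_m)$, the definition of a crossing section forces $p_1, q_{\ell_i} \in Q^\onright$ and makes the reading directions of $p_1, p_2, \dots$ alternate; hence $t_m$ is a left-to-right crossing of the gap between boundaries $i-1$ and $i$ when $m$ is odd and a right-to-left crossing when $m$ is even. The two anchorages of $c_i$ should then be read as encodings of the sequence of states in which $\rho$ sits at a boundary just before leaving it: the $\varepsilon$-cases in the definition of $f$ are exactly those in which, just after a crossing, $\rho$ keeps the same reading direction and therefore stays inside one gap (so that boundary need not be synchronised with the neighbouring section at that instant), whereas the non-$\varepsilon$ cases record the states at which $\rho$ genuinely passes control to the neighbouring gap. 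With this reading, $(c_i, c_{i+1}) \in M$ says precisely that the $\onright$-side view of boundary $i$ obtained from $c_i$ equals its $\onleft$-side view obtained from $c_{i+1}$.

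Next I would build the run as a single thread. It starts at the head's left end in the state $p_1 \in Q_I$ which, because $c_1$ is initial, is the whole $\onleft$-anchorage of $c_1$. Whenever the thread reaches a boundary in some state $q$, the reading direction of $q$ dictates whether the next transition lies in the section to the left or to the right, and the portion of the relevant anchorage already consumed dictates which occurrence to use, so the thread is deterministic. I would carry the invariant that, after any number of steps, the transitions used so far form, for every $i$, a temporal prefix of $c_i$, and that the cuts determining these prefixes are compatible with the decompositions of the anchorages witnessed by the matching equalities. Combining this invariant with the matching equalities and the structural restrictions that keep the head between $\lword$ and $\rword$, one shows that the thread never blocks, never runs off the input, and halts only on reaching the $\onright$-anchorage of $c_n$, which is a single state in $Q_F$ since $c_n$ is accepting. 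At that moment every transition of every $c_i$ has been used exactly once, so the resulting run $\rho$ is accepting and $\C(\rho) = s$ by construction; and since $\rho$ uses each transition of each $c_i$ exactly once while addition on $\Z^d$ is commutative, $V(\rho) = \sum_{i=1}^n V(c_i)$.

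The hard part is the middle step: showing that the thread is simultaneously non-blocking and \emph{exhaustive}. A priori the collection of crossings described by $s$ could split into the intended path from the initial to the accepting configuration \emph{together with} one or more disjoint cycles, in which case the reconstructed run would satisfy $\C(\rho) \neq s$. Ruling this out is exactly where the finiteness and the strict alternation built into the definition of a crossing section, together with the precise choice of which occurrences an anchorage records and which it elides to $\varepsilon$, come into play; I expect the cleanest packaging to be the "prefix of each $c_i$ consumed" invariant above, established by induction on the number of thread steps. An essentially equivalent alternative is an induction on $n$ that splices out the rightmost gap, rerouting the excursions recorded by $c_n$ through the matched $\onright$-anchorage of $c_{n-1}$; this swaps the cycle-freeness bookkeeping for some boundary bookkeeping at the new right end.
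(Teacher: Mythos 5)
The paper itself never writes out a proof of this lemma (neither in the body nor in the appendix; it is treated as a variant of the folklore crossing-section argument cited from Hopcroft--Ullman), so there is no official proof to measure you against. Your architecture is certainly the intended one: reconstruct $\rho$ by threading through the sections, with the invariant that the transitions consumed so far form a temporal prefix of each $c_i$ and that the prefix cuts are aligned with the decompositions of the anchorages, then conclude $\C(\rho)=s$ and the value identity by commutativity. Your reading of the $\varepsilon$-cases of $f$ is also essentially right (a recorded pair is a hand-off to the neighbouring cell, an elided pair is a turn-around back into the same cell), modulo the loose phrase ``keeps the same reading direction''.

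The genuine gap is that the decisive verification is only announced, never performed: ``one shows that the thread never blocks, never runs off the input, and halts only on reaching the $\onright$-anchorage of $c_n$'' and ``I expect the cleanest packaging to be the prefix invariant'' is precisely the content of the lemma, restated rather than proved. Concretely, three things must be extracted from the hypothesis and are missing. (i) From the fact that recorded pairs of a $\onright$-anchorage have second component in $Q^\onleft$ (sources of even, leftward crossings) while recorded pairs of a $\onleft$-anchorage have second component in $Q^\onright$, the string equality in $M$ forces every first component on the $\onright$ side to lie in $Q^\onright$ and every first component on the $\onleft$ side to lie in $Q^\onleft$; hence whenever a crossing of $c_i$ lands at a boundary in a state that reads $a_i$ again, the next listed crossing of $c_i$ must start in exactly that state (the pair is forced to be elided). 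This is what makes the thread non-blocking at turn-arounds, and it is not a definition-chase --- it is where the alternation condition and the asymmetric elision rule are actually used. (ii) The positional alignment of the two anchorage strings must be maintained as part of the induction, so that the $k$-th rightward hand-off out of $c_i$ meets the $k$-th left-entry of $c_{i+1}$ with the next unconsumed crossing of $c_{i+1}$ being precisely the corresponding odd one; this is the ``cuts compatible with the matching equalities'' clause of your invariant, and it needs an explicit inductive proof, not a mention. (iii) Exhaustiveness: when the thread halts in a state of $Q_F$ (which, since halting states have no outgoing transitions and the $\onright$-anchorage of $c_n$ is a single accepting state, can only happen on the last crossing of $c_n$), one still has to propagate right-to-left --- all of $c_n$ consumed implies, via matching and the prefix property, that the last crossing of $c_{n-1}$ is consumed, hence all of $c_{n-1}$, and so on --- to rule out the leftover cycles you correctly worry about; without this step you get an accepting run but not $\C(\rho)=s$, and the value identity (the whole point of the lemma for Parikh automata) fails. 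Since these three items are exactly what the lemma asserts, a proof that defers them to ``one shows'' has not yet proved the lemma, even though the plan, if executed with the directionality analysis in (i)--(iii), does go through.
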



\begin{theorem} \label{thm:two-way_to_one-way}
	Let $k \in \N_{\neq 0}$.
	Given a $k$-visit $\twoNPA$ $P$, one can effectively construct a language equivalent $\NPA$ $R$ that is at most exponentially bigger.
	Furthermore, if $P$ is deterministic then $R$ is unambiguous.
\end{theorem}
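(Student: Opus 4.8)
The plan is to build the one-way Parikh automaton $R$ whose states are crossing sections of $P$, with transitions given by the matching relation $M$, and weight vectors obtained by summing the vectors of the transitions in each crossing section. Concretely, let $P = (A,\lambda,\psi)$ with $A = (Q,Q_I,Q_H,Q_F,\Delta)$ be $k$-visit. By Proposition~\ref{prop:visit} we may take $k \le |Q|$. Define $R = (B, \lambda', \psi)$ where $B$ is a one-way automaton whose state set is the set of all $a$-crossing sections of $A$ (ranging over $a \in \Sigma \cup \{\lword,\rword\}$), together with a fresh initial state; the initial states are the initial $\lword$-crossing sections, and the accepting states are the accepting $\rword$-crossing sections. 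On reading a letter $a$, $B$ moves from a crossing section $c_1$ to an $a$-crossing section $c_2$ whenever $(c_1,c_2) \in M$, producing the weight vector $V(c_2)$ (with a small bookkeeping convention at the endpoints for the $\lword$- and $\rword$-crossing sections so that each crossing section contributes its value exactly once). The acceptance constraint $\psi$ is kept unchanged.

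Correctness splits into the two inclusions $L(R) \subseteq L(P)$ and $L(P) \subseteq L(R)$. For the first, an accepting run of $R$ on $u = a_1 \dots a_n$ is by construction a sequence $c_1,\dots,c_n$ of crossing sections with $c_1$ initial, $c_n$ accepting, and $(c_i,c_{i+1}) \in M$ for all $i$; Lemma~\ref{lem:merging} then yields an accepting two-way run $\rho$ of $A$ on $u$ with $V(\rho) = \sum_i V(c_i)$, which is exactly the weight accumulated by the run of $R$, so $V(\rho) \models \psi$ and $u \in L(P)$. For the converse, given an accepting run $\rho$ of $P$ on $u$, the crossing section sequence $\C(\rho)$ is well-defined (each position visited at most $k$ times, so each crossing section has length $\le k$), its first and last crossing sections are initial and accepting respectively, and consecutive crossing sections match by the very definition of the anchorages as the sequence of states entering/leaving a position from the left/right; hence $\C(\rho)$ is an accepting run of $R$ with value $\sum_i V(c_i) = V(\rho) \models \psi$. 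The size bound is routine: a crossing section is a word over $\Delta$ of length at most $k \le |Q|$, so there are at most $|\Delta|^{|Q|} = 2^{O(|Q|^2 \log|Q|)}$ of them, giving an exponential blow-up in $|Q|$; the dimension $d$, the vector entries and $\psi$ are unchanged (the new vectors $V(c)$ are sums of at most $|Q|$ original vectors, so their bit-size grows only by an additive $O(\log|Q|)$ term), so $|R|$ is at most exponential in $|P|$.

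For the determinism-to-unambiguity claim, suppose $A$ is deterministic. The key observation is that a deterministic two-way automaton has at most one accepting run on any input, and moreover the crossing section sequence of that run is entirely determined: reading left-to-right, the $\onleft$-anchorage of $c_i$ (the states entering position $i$ from the left) together with determinism of $\Delta$ forces the transitions of $c_i$ one by one, alternately running forward and, upon a left-move, being picked up again by the matching constraint from $c_{i-1}$. Thus for each input $u$ there is at most one sequence $c_1,\dots,c_n$ that forms an accepting run of $R$, i.e.\ $R$ is unambiguous. I would make this precise by showing that if $c_1,\dots,c_n$ and $c_1',\dots,c_n'$ are two accepting runs of $R$ on $u$, then Lemma~\ref{lem:merging} produces accepting two-way runs $\rho,\rho'$ with $\C(\rho) = (c_i)$ and $\C(\rho') = (c_i')$; by determinism $\rho = \rho'$, hence $\C(\rho) = \C(\rho')$, i.e.\ $c_i = c_i'$ for all $i$.

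\textbf{The main obstacle} I expect is not the construction itself but pinning down the endpoint bookkeeping and the precise matching at the boundaries so that Lemma~\ref{lem:merging} applies verbatim: one must be careful that the $\lword$- and $\rword$-crossing sections are handled consistently with the requirement $p_1, q_\ell \in Q^\onright$ in the definition of a crossing section (the run starts reading $\lword$ to the right and ends reading $\rword$ to the right), that the initial crossing section's anchorage condition matches "$p_1 \in Q_I$" from the definition, and that each crossing section's value is summed exactly once rather than double-counted across the transition boundary. A secondary point requiring care is the complexity accounting in the statement's "at most exponentially bigger": one should verify against the definition $|R| = |Q_R| + |\psi| + |\range(\lambda')|\bigl(d\log_2(\mu'+1) + |Q_R|^2\bigr)$ that $|Q_R|$ is exponential in $|Q|$ while the other parameters stay polynomial, so the dominant term $|Q_R|^2$ is still at most exponential in $|P|$. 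Neither of these is deep, but both are exactly the kind of detail where a sloppy argument would leave a gap.
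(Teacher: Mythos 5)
Your proposal follows essentially the same route as the paper: states of $R$ are the crossing sections, transitions are given by the matching relation $M$ with weights being the value of a crossing section, one inclusion via Lemma~\ref{lem:merging}, the converse via the observation that $\C(\rho)$ of an accepting two-way run matches consecutively, and unambiguity from determinism by pulling two accepting runs of $R$ back to two-way runs through Lemma~\ref{lem:merging}. The only differences are cosmetic (you charge $V(c_2)$ with a fresh initial state where the paper charges $V(c_1)$ with a fresh final state $\top$ — a dual bookkeeping choice you correctly flag as needing care — and your side remark that parameters other than the state count stay polynomial is slightly off since $|\range(\lambda')|$ can also blow up exponentially, which does not affect the claimed exponential bound).
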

\begin{proof}
	Let $P = (A, \lambda, \psi)$ with $A = (Q, Q_I, Q_H, Q_F, \Delta)$ be a $k$-visit $\twoNPA$ of dimension $d$ with $n = |Q|$ states.
	In this proof we show how to construct $R = (B, \omega, \psi)$ where $B = (V, V_I, V_H, V_F, \Gamma)$ is a $\NPA$ of dimension $d$ having $\O(n^{2k})$ states  such that $|\range(\omega)| \leq |\range(\lambda)|^{k+1}$.
	Note that the formula $\psi$ is the same in both $P$ and $R$.
	
	To do so, we first consider a symbol $\top$ and extend the relation $M$ such that $(c, \top) \in M$ holds for all accepting crossing section $c$.
	Then, we define $R$ as follows:
	\begin{itemize}
	\item
		$V$ is the set of crossing sections of length at most $k$
		
	\item
		$V_I$ is the set of initial crossing sections and $V_H = V_F = \{ \top \}$ 
		
	\item
		$\Gamma = \{ (c_1, a, c_2) \in V \times \Sigma \cup \{\lword, \rword\} \times V \mid (c_1, c_2) \in M \land c_1 \text{ is an $a$-crossing section} \}$
		
	\item
		$\omega \colon (c_1, a, c_2) \mapsto V(c_1)$
	\end{itemize}
Similar to the case of $\twoNFA$, a word $u$ is accepted by $B$ if there exists an accepting run of $B$ on $\lword u \rword$, and the language $L(B)$ of $B$ is defined as the set of words it accepts.
        The inclusion $L(R) \subseteq L(P)$ is a direct consequence of
        Lemma~\ref{lem:merging}, while the other direction is based on
        the following observation: any accepting two-way run $\rho$ has a
        sequence of crossing sections $\C(r)$, consecutively satisfying the
        matching relation.
        Note that, the choice of $c_2$ in a transition $(c_1, a, c_2)$ is non-deterministic in general; but when $P$ is deterministic at most one such choice of $c_2$ will corresponds to a two-way run ensuring unambiguity.
        Details can be found in Appendix. 
\end{proof}


The previous crossing section construction permits to construct a one-way automaton from a bounded-visit two-way one.
This construction is exponential in the number of states and in the number of distinct weight vectors.
Nevertheless, a close inspection of the proof of Theorem~\ref{thm:two-way_to_one-way}, reveals that the exponential explosion in the number of distinct weight vectors can be avoided, while preserving the non-emptiness (but not the language).

\begin{lemma} \label{lem:clever_two-way_to_one-way}
	Let $P$ be a $k$-visit $\twoNPA$.
	We can effectively construct a $\NPA$ $R$ with $\O(n^{2k})$ states and such that $L(R) = \varnothing$ iff $L(P) = \varnothing$.
	Furthermore, $R$ has the same set of weight vectors and the same acceptance constraint as $P$.
\end{lemma}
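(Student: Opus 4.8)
The plan is to revisit the construction of Theorem~\ref{thm:two-way_to_one-way} and observe that the only source of blow-up in the number of distinct weight vectors is the rule $\omega \colon (c_1,a,c_2)\mapsto V(c_1)$, which sums up to $k$ weights of $P$ along a crossing section; there are $|\range(\lambda)|^{k+1}$ such sums in the worst case. Since we only need to preserve non-emptiness and not the language, we can afford to \emph{decompose} each crossing-section transition of $R$ into a short chain of transitions, each carrying a single weight vector of $P$ and reading an auxiliary padding symbol, so that the cumulative value of the chain equals $V(c_1)$. This replaces one transition of weight $V(c_1)=\sum_{i=1}^{\ell}\lambda(p_i,a,q_i)$ (with $\ell\le k$) by $\ell$ transitions of weights $\lambda(p_1,a,q_1),\dots,\lambda(p_\ell,a,q_\ell)$, all of which already belong to $\range(\lambda)$.

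Concretely, I would keep $B=(V,V_I,V_H,V_F,\Gamma)$ from Theorem~\ref{thm:two-way_to_one-way} as a scaffold but work over the padded alphabet $\Sigma'=\Sigma\cup\{\lword,\rword\}\cup\{\sharp\}$ for a fresh symbol $\sharp$. For each matching pair $(c_1,c_2)\in M$ with $c_1=(p_1,a,q_1)\dots(p_\ell,a,q_\ell)$ an $a$-crossing section, introduce $\ell-1$ fresh intermediate states $r_1,\dots,r_{\ell-1}$ (one copy per such pair, so still $\O(n^{2k})$ states overall since $|V|^2$ pairs each contribute $O(k)$ states) and replace the single edge $(c_1,a,c_2)$ by the path $c_1 \xrightarrow{a\mid \lambda(p_1,a,q_1)} r_1 \xrightarrow{\sharp\mid \lambda(p_2,a,q_2)} r_2 \to \dots \xrightarrow{\sharp\mid \lambda(p_\ell,a,q_\ell)} c_2$. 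The acceptance constraint $\psi$ is unchanged, and the set of weight vectors used is exactly $\range(\lambda)$ (plus possibly the zero vector on trivial length-one sections, which can be taken to be $0_{\Z^d}$ and absorbed into $\range(\lambda)$ or handled by keeping the single original edge when $\ell=1$). Call this automaton $R$.

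It then remains to check that $L(R)=\varnothing$ iff $L(P)=\varnothing$. The new automaton $R$ accepts a word $w'$ over $\Sigma'$ iff $w'$ is obtained from some word $w\in L(B)$ by inserting, at each position, a block of $\sharp$'s whose length matches the length of the guessed crossing section minus one, \emph{and} the sum of weights along the unique corresponding path — which by construction equals $\sum_i V(c_i) = V(\rho)$ for the two-way run $\rho$ witnessing $w\in L(B)$ via Lemma~\ref{lem:merging} — satisfies $\psi$. Hence $L(R)\ne\varnothing$ exactly when there is $w\in L(B)$ with the accumulated value satisfying $\psi$, i.e.\ exactly when $L(R_{\text{Thm}})\ne\varnothing$ for the automaton of Theorem~\ref{thm:two-way_to_one-way}, i.e.\ (by that theorem) exactly when $L(P)\ne\varnothing$. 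The size bound is immediate: $R$ has $\O(n^{2k})$ states, $\range(\omega)=\range(\lambda)$, and acceptance constraint $\psi$.

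The main obstacle is bookkeeping rather than conceptual: one must be careful that the padding does not let $R$ accept "spuriously'' — e.g. that the $\sharp$-block inserted after a position is forced to have exactly the right length so that a full crossing-section chain is traversed and no partial traversal can reach an accepting state — and that the first/last crossing sections (initial/accepting, reading $\lword$/$\rword$) and length-one sections are treated consistently. This is ensured by making the intermediate states $r_i$ private to each chain and only $\sharp$-reading, and by noting $V_F=\{\top\}$ is reached only after the full $\rword$-chain. I would also remark that determinism is \emph{not} claimed to be preserved (the statement does not ask for it), which is why we are free to pad; preserving language would force us back to the exponential $\range(\omega)$.
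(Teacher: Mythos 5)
Your proposal is essentially the paper's own proof: the paper likewise splits each one-way transition $t=(c_1,a,c_2)$ into $|c_1|$ consecutive transitions through fresh intermediate states $(t,i)$ reading a new symbol $\#$, assigns to the $i$th transition of the chain the vector of the $P$-transition $c_1[i]$, keeps $\psi$ unchanged, and observes that the two languages coincide modulo erasing $\#$. So the approach, including the private per-chain intermediate states and the handling of weights, matches the paper's construction.
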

\begin{proof}
	The construction is the same as in
        Theorem~\ref{thm:two-way_to_one-way} but each transition of
        the one-way automaton $t = (c_1, a, c_2)$ is split into the
        following $|c_1|$
        consecutive transitions, using a fresh symbol
        $\#\notin\Sigma$: 
        $c_1\xrightarrow{a} (t,1)\xrightarrow{\#}
        (t,2)\xrightarrow{\#}\dots (t,|c_1|-2)\xrightarrow{\#} (t,
        |c_1|-1)\xrightarrow{\#} c_2$. The vectors of those transitions are defined as follows. If
        $c_1[i]$ denotes the $i$th transition of $c_1$, then 
        the vector of the first $R$-transition is the vector of the
        $P$-transition $c_1[1]$, and
        the vector of any $R$-transition from state $(t,i)$ is the vector
        of the $P$-transition $c_1[i+1]$. 
	The two languages are then equal modulo erasing $\#$ symbols.
\end{proof}


\begin{theorem} \label{thm:2DPA_UPA}
	Unambiguous Parikh automata have the same expressiveness as
        two-way deterministic (even reversible\footnote{deterministic
          and co-deterministic}) Parikh automata i.e.\ $\UPA = \twoDPA$.
	Furthermore, the transformation from one formalism to the other can be done in \ExpTime.
\end{theorem}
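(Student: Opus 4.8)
The plan is to establish the two inclusions separately, both effectively and within \ExpTime. For the inclusion $\twoDPA \subseteq \UPA$, I would directly invoke Lemma~\ref{lem:clever_two-way_to_one-way}'s twin, Theorem~\ref{thm:two-way_to_one-way}: a $\twoDPA$ $P$ with $n$ states is $n$-visit by Proposition~\ref{prop:visit} (applied to its underlying $\twoNFA$), so Theorem~\ref{thm:two-way_to_one-way} with $k=n$ produces a language-equivalent $\NPA$ $R$ of size $\O(n^{2n})$ together with the extra guarantee that $R$ is \emph{unambiguous} because $P$ is deterministic. The bound $|\range(\omega)|\le|\range(\lambda)|^{n+1}$ and the fact that $\psi$ is unchanged keep the whole object of exponential size, and the construction is carried out in \ExpTime. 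Thus $\twoDPA \subseteq \UPA$ with an \ExpTime transformation; this direction is essentially a packaging of already-proven material.

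For the converse $\UPA \subseteq \twoDPA$, I would proceed via the transducer analogy sketched in the introduction. View an $\NPA$ $P=(A,\lambda,\psi)$ of dimension $d$ as a finite-state transducer $T_P$ over the output alphabet $\range(\lambda)\subseteq\Z^d$: $T_P$ has the same transition structure as $A$ and, on transition $t$, outputs the single vector $\lambda(t)$. If $P$ is unambiguous then $T_P$ is an unambiguous (hence functional) finite transducer. By the classical result that unambiguous finite transducers are equivalent to two-way deterministic finite transducers — originally Chytil–Jákl~\cite{Chytil} based on Aho–Hopcroft–Ullman~\cite{Aho1969}, with the single-exponential bound of~\cite{DBLP:conf/icalp/DartoisFJL17} — there is a \twoDFT{} $T'$ computing the same function, of size exponential in $|T_P|$ and computable in \ExpTime. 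I would then read $T'$ back as a $\twoDPA$ $P'$: keep the two-way deterministic transition structure of $T'$, and define the weight function $\lambda'$ so that a transition of $T'$ outputting the finite word of vectors $\vec w_1\cdots\vec w_r$ contributes their sum $\vec w_1+\dots+\vec w_r$ (splitting into $r$ consecutive moves over the same position if one insists each transition carries a single vector, exactly as in Lemma~\ref{lem:clever_two-way_to_one-way}). Using the same acceptance constraint $\psi$, the value $V(\rho')$ of the unique run of $P'$ on an accepted word $u$ equals the component-wise sum of the output of $T'$ on $u$, which by commutativity of $+$ equals $V(\rho)$ for the unique accepting run $\rho$ of $P$ on $u$; hence $V(\rho')\models\psi$ iff $u\in L(P)$, giving $L(P')=L(P)$. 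Reversibility (determinism and co-determinism) comes for free if one uses a reversible \twoDFT{} in the citation, or can be obtained by a standard postprocessing.

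The main obstacle is entirely located in the $\UPA\subseteq\twoDPA$ direction, and it is conceptual rather than computational: one must argue that the equivalence ``unambiguous $\mapsto$ two-way deterministic'' is robust under replacing the free output monoid $\Sigma^*$ by the commutative monoid $(\Z^d,+)$ — i.e.\ that it suffices to reason about the sequence of output vectors as an abstract word and only collapse to its sum at the very end. This is exactly the point flagged in the introduction (``Parikh automata can be seen as transducers producing sequences of vectors''), and it is legitimate precisely because the acceptance test in a Parikh automaton inspects only the sum, never the order; so the cited transducer theorem, which preserves the output \emph{word} function, a fortiori preserves the induced \emph{sum} function. The remaining work — checking that the \twoDFT{} produced by~\cite{DBLP:conf/icalp/DartoisFJL17} satisfies the syntactic constraints of our \twoNFA{} definition (reading heads between positions, endmarker conditions, halting states reading $\rword$) — is routine normalisation, and the \ExpTime bound for the whole round trip follows since each ingredient is at most single-exponential.
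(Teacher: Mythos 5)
Your proposal is correct and follows essentially the same route as the paper: $\twoDPA\subseteq\UPA$ via Theorem~\ref{thm:two-way_to_one-way} (using the $n$-visit property of deterministic two-way machines), and $\UPA\subseteq\twoDPA$ by viewing the Parikh automaton as an unambiguous transducer over the alphabet of weight vectors and invoking the Chytil--J\'akl/Aho--Hopcroft--Ullman result as improved in~\cite{DBLP:conf/icalp/DartoisFJL17} to get a reversible deterministic two-way transducer, read back as a $\twoDPA$ with the same constraint $\psi$. The only cosmetic differences are that the paper works with letter-to-letter transducers, so no per-transition summing or splitting of output vectors is needed, and that the $n$-visit bound for deterministic machines comes from non-repetition of configurations on accepting runs rather than from Proposition~\ref{prop:visit}, which presupposes bounded-visitness.
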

\begin{proof}
	We only show here $\UPA \subseteq \twoDPA$.
	The opposite direction is given by Theorem~\ref{thm:two-way_to_one-way}.
	Let $P = (A, \lambda, \psi)$ be a $\UPA$ of dimension $d$ over $\Sigma$.
	Consider the alphabet $\Lambda \subseteq \Z^d$ as the set of vectors occurring on the transitions of $P$.
	We can see the automaton $A$ with the morphism $\lambda$ as an unambiguous finite transducer $T$ defining a function from $\Sigma^*$ to $\Lambda^*$.
	It is known that any unambiguous letter-to-letter one-way transducer can be transformed into an equivalent letter-to-letter deterministic two-way transducer.
	This result is explicitly stated in Theorem~1 of~\cite{Chytil}
	which is based on a general technique introduced by Aho, Hopcroft and Ullman~\cite{Aho1969}\footnote{Based on AHU's technique, a similar result was shown in~\cite{DBLP:journals/ijfcs/CarninoL15} for weighted automata, namely that unambiguous weighted automata over a semiring can be equivalently converted into deterministic two-way weighted automata}.
	Recently, another technique has been introduced which improves AHU's technique by one exponential~\cite{DBLP:conf/icalp/DartoisFJL17}, and allows to show that any unambiguous finite transducer is equivalent to a reversible two-way transducer exponentially bigger, yielding our result.
\end{proof}


	\section{Emptiness Problem} \label{sec:decision}

The emptiness problem asks, given a $\twoNPA$, whether the language it accepts is empty.
We have seen in Example~\ref{fig:times} how to encode the multiplication of two natural numbers encoded in unary.
We can generalise this to the encoding of solutions of Diophantine equations as languages of $\twoNPA$, yielding undecidability:

\begin{theorem}\label{thm:undectwoNPA}
	The emptiness problem for $\twoNPA$ is undecidable.
\end{theorem}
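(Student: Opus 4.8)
The plan is to reduce from \emph{Hilbert's tenth problem}: by the MRDP theorem, it is undecidable whether a polynomial equation $P(x_1,\dots,x_k)=0$ with integer coefficients has a solution in $\N^k$. From such a $P$ I will build a $\twoNPA$ $\mathcal{P}$ with $L(\mathcal{P})\neq\varnothing$ iff the equation has a solution in $\N$. As a first step I normalise the equation into a finite system $\mathcal{S}$ of \emph{atomic} constraints: by repeatedly introducing fresh variables, $P=0$ has a solution in $\N$ iff $\mathcal{S}$ does, where each constraint of $\mathcal{S}$ has the form $x_i=c$, $x_i=x_j+x_{j'}$, or $x_i=x_j\times x_{j'}$, together with one final linear equation $\sum_i a_i x_i=b$ over all variables. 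Let $m$ be the total number of variables in $\mathcal{S}$; note that $m$ and the number of constraints depend only on $P$ and are therefore fixed.

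Next I encode candidate valuations as words over $\Sigma=\{a,\#\}$ of the form $w=a^{n_1}\#a^{n_2}\#\cdots\#a^{n_m}\#$, where $n_i\in\N$ is the value of $x_i$. The $\twoNPA$ $\mathcal{P}=(A,\lambda,\psi)$ first checks with its finite control that the input has this shape (finitely many $\#$-separated blocks of $a$'s). For each product constraint $x_i=x_j\times x_{j'}$ it uses a dedicated pair of counter dimensions and runs a gadget which is exactly the multiplication gadget of Fig.~\ref{fig:times}, transposed to operate on the three relevant blocks $a^{n_j}, a^{n_{j'}}, a^{n_i}$ of $w$: the two-way head makes $p$ passes over $a^{n_j}$ with $p$ guessed nondeterministically, adding $1$ to the ``product'' dimension for each $a$ read on a right-to-left pass and $1$ to the ``count'' dimension once per pass, so that these dimensions accumulate $n_j\cdot p$ and $p$ respectively; it then traverses $a^{n_{j'}}$ once decrementing the count dimension, and $a^{n_i}$ once decrementing the product dimension. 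Since $\psi$ will require all these dimensions to sum to $0$, this forces $p=n_{j'}$ and $n_j\cdot p=n_i$, i.e.\ $n_i=n_j\times n_{j'}$. As there are finitely many constraints, $A$ executes these gadgets one after another, its finite control keeping track of which block it is currently visiting, with fresh dimensions for each gadget; the constant, additive and final linear constraints are simply accumulated into further dimensions and tested by the acceptance formula $\psi$, which is existential Presburger (a conjunction of linear equalities, hence semi-linear) asserting that all gadget dimensions are zero and that the accumulated linear combinations have the required values. No counter is tested before the end of the run, so the construction stays within the $\twoNPA$ model, and the bounded-visit restriction is not needed.

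Correctness is then routine: if $\mathcal{P}$ has an accepting run on some $w$, then $w$ has the required block shape and the values $n_1,\dots,n_m$ satisfy every constraint of $\mathcal{S}$ (the zero-tests on gadget dimensions give the product constraints, the rest are immediate), so $P=0$ has a solution; conversely, a solution $(v_1,\dots,v_k)$ extends to a valuation $n_1,\dots,n_m$ of all variables of $\mathcal{S}$, and feeding $a^{n_1}\#\cdots\#a^{n_m}\#$ to $\mathcal{P}$ while running each multiplication gadget with exactly $p=n_{j'}$ passes yields an accepting run. Hence emptiness of $\twoNPA$ is undecidable. The main obstacle is organisational rather than conceptual: one must check that a single nondeterministic two-way control can navigate to the correct blocks for every gadget and that the counter dimensions of distinct gadgets do not interfere; this is handled by using pairwise disjoint dimensions and deferring every test to the final acceptance condition, exactly as in Ibarra's argument for two-way reversal-bounded counter machines.
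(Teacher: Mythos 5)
Your proposal is correct and follows essentially the same route as the paper: a reduction from Hilbert's tenth problem in which values are encoded in unary on the input word, each multiplication is verified by the nondeterministic multi-pass gadget of Fig.~\ref{fig:times} with every test deferred to the final existential Presburger acceptance constraint, and additions, constants and equalities are checked linearly at the end. The only difference is presentational: the paper builds the automaton by structural induction on the polynomial, using marker symbols $1_{p'}$ whose counts carry the values of subpolynomials, whereas you flatten the equation into atomic constraints and encode the valuation as $\#$-separated unary blocks, your fresh variables playing exactly the role of the paper's subpolynomial counts.
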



The proof of this theorem relies on the fact that an input position can be visited an arbitrary number of times, due to non-determinism.
If instead we forbid this, we recover decidability.
To prove it, we proceed in two steps: first, we rely on the result of the previous section showing that any bounded-visit $\twoNPA$ can be effectively transformed into some (one-way) $\NPA$.
This yields decidability of the emptiness problem as this problem is known to be decidable for $\NPA$.
To get a tight complexity in \PSpace, we analyse this transformation (which is exponential), to get exponential bounds on the size of shortest non-emptiness witnesses. 
A key lemma is the following, whose proof gathers ideas and arguments that already appeared in~\cite{phdLin,DBLP:conf/lics/FigueiraL15}.
Since the statement was not explicit in those papers, and its proofs relies on arguments that appear at different places, we prove it in Appendix.

\begin{lemma}\label{lem:one-way_to_formula}
	Let $P$ be a one-way Parikh automaton with $n$ states and $\gamma$ distinct weight vectors.
	Then, we can construct an existential Presburger formula
        $\varphi(x) = \bigvee_{i=1}^m \varphi_i(x)$ such that
        for all $\ell\in \N$,  $\varphi(\ell)$ holds iff there exists $w \in L(P) \cap \Sigma^{|\ell|}$.
	Furthermore, $log_2(m)$ and each $\varphi_i$ are 
$O(\poly(|P|, \log n))$,
	and can be constructed in time $2^{\O(\gamma^2\log(\gamma n))}$.
\end{lemma}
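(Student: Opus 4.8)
The plan is to reduce the statement to a quantitative form of Parikh's theorem for finite automata. The first step is to discard the input letters and keep only the weight information. Branching on the $\O(|P|^2)$ choices of the first and last (delimiter-reading) transitions of an accepting one-way run on $\lword w\rword$ --- which fixes the additive contribution $\vec{c}\in\Z^d$ of those two transitions to the value, and an additive constant $c\le 2$ to the length; we suppress this routine case split below --- it suffices to treat, for a fixed such choice, the plain finite automaton $P'$ over the alphabet $\Lambda = \range(\lambda) = \{v_1,\dots,v_\gamma\}$ obtained from $P$ by keeping its non-delimiter transitions and relabelling each $(q,a,q')$ by $\lambda(q,a,q')$. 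This relabelling is faithful: reading off the sequence of weights along an accepting run of $P$ yields an accepting word $z\in\Lambda^*$ of $P'$, and conversely re-decorating each transition of an accepting run of $P'$ with an input letter carrying the right weight yields an accepting run of $P$; both runs have value $\vec{c} + \sum_{j=1}^\gamma |z|_{v_j}\,v_j$ and read a word of length $|z|+c$. Since $|z| = \sum_j |z|_{v_j}$ is a linear function of the Parikh image $\P(z)$, we conclude: $L(P)\cap\Sigma^{\ell}\neq\varnothing$ iff (for some choice in the case split) the semilinear set $T := \P(L(P'))\subseteq\N^\gamma$ contains a vector $(N_1,\dots,N_\gamma)$ with $\sum_j N_j = \ell + c$ and $\vec{c} + \sum_j N_j v_j \models \psi$.

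Second, I would invoke the quantitative Parikh theorem for NFAs: since $P'$ has at most $n$ states over a $\gamma$-letter alphabet, $T$ admits a period-base representation $T = \bigcup_{i=1}^m L_i$ with $m = 2^{\O(\gamma^2\log(\gamma n))}$ linear components, each $L_i$ having $\O(\gamma)$ periods, and all bases and periods having entries polynomially bounded in $n$ (hence $\O(\log n)$ bits each); moreover this representation is computable in time $2^{\O(\gamma^2\log(\gamma n))}$. This is exactly the bound one extracts by decomposing an accepting NFA run into a simple source-to-target path together with a connectivity-respecting multiset of simple cycles --- each of $\ell_1$-norm at most $n$ --- as carried out, in somewhat scattered form, in \cite{phdLin, DBLP:conf/lics/FigueiraL15}; the careful reassembly and re-quantification of this argument is precisely what the appendix supplies.

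Third, I would translate each $L_i$ into an existential Presburger formula. Writing $N_j(\vec{t}) := b_{i,j} + \sum_{r=1}^{k_i} t_r\,p_{i,r,j}$ for the $j$-th coordinate of a generic element of $L_i$, set
\[
 \varphi_i(x)\ :=\ \exists t_1\cdots\exists t_{k_i}\ \Bigl(\, \textstyle\sum_{j=1}^\gamma N_j(\vec{t}) = x + c \ \wedge\ \psi\bigl(\vec{c} + \textstyle\sum_{j=1}^\gamma N_j(\vec{t})\,v_j\bigr)\,\Bigr),
\]
where $\psi(\cdot)$ means substituting the displayed linear terms for the free variables of $\psi$. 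As $\psi$ is existential and we substitute only linear terms, $\varphi_i$ is existential, and its size is $\O(\gamma k_i\log n + |\psi|) = \O(\poly(|P|,\log n))$, because $\gamma, k_i \le |P|$, the base and period entries need only $\O(\log n)$ bits, and $|\psi| \le |P|$. Taking $\varphi(x) := \bigvee_{i=1}^m \varphi_i(x)$ (over all $i$ and all choices in the case split), we obtain $\log_2 m = \O(\gamma^2\log(\gamma n)) = \O(\poly(|P|,\log n))$, the whole construction runs in time $2^{\O(\gamma^2\log(\gamma n))}$ by enumerating the $L_i$'s, and correctness is immediate from the chain of equivalences of the first step.

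The relabelling, the Presburger substitution, and the delimiter/length bookkeeping are all routine. The main obstacle is the second step: establishing, effectively and within the claimed time bound, that $\P(L(P'))$ admits a semilinear description with only $2^{\O(\gamma^2\log(\gamma n))}$ linear sets, each with few periods of polynomially bounded magnitude. This is the heart of the matter --- it is where the simple-path-plus-simple-cycle decomposition of accepting NFA runs and the counting of the resulting skeleton types are needed --- and is the reason the detailed proof is deferred to the appendix; everything else amounts to assembling known pieces around it.
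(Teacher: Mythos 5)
Your proposal follows essentially the same route as the paper: relabel the transitions of $P$ by their weight vectors to obtain an NFA over the vector alphabet, invoke the quantitative Parikh theorem (Theorem~7.3.1 of~\cite{phdLin}, which the appendix restates and uses as a black box rather than re-proving) to get the exponentially many small linear sets, and then glue each linear set to $\psi$ and to the length constraint by an existential formula over the visit counts of the weight vectors. The only slip is your claim that the base vectors have entries polynomially bounded in $n$ (hence $\O(\log n)$ bits each): the actual bound is $n^{3\gamma+3}\gamma^{4\gamma+6}$, i.e.\ bit-size $\O(\gamma\log(\gamma n))$, but since $\gamma\le |P|$ this still yields $|\varphi_i|=\O(\poly(|P|,\log n))$, so none of the stated bounds is affected.
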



Thanks to the lemma above, we are able to show that the non-emptiness
problem for bounded-visit $\twoNPA$ is \PSpaceC, just as the
non-emptiness problem for two-way automata. In some sense, adding
semi-linear constraints to two-way automata is for free as long as it is
bounded-visit. 

\begin{theorem} \label{thm:PA_emptiness}
	The non-emptiness problem for bounded-visit $\twoNPA$ is \PSpaceC.
	It is \NPTimeC for $k$-visit $\twoNPA$ when $k$ is fixed.
\end{theorem}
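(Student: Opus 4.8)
The plan is to prove both the upper and lower bounds. For the \PSpace upper bound in the bounded-visit case, I would combine Proposition~\ref{prop:visit}, Lemma~\ref{lem:clever_two-way_to_one-way} and Lemma~\ref{lem:one-way_to_formula}. First, given a bounded-visit $\twoNPA$ $P$ with $n$ states, by Proposition~\ref{prop:visit} it is $k$-visit for some $k\le n$; crucially $k$ need not be part of the input, but we may assume $P$ comes with such a $k$ (or search for it). By Lemma~\ref{lem:clever_two-way_to_one-way}, $L(P)=\varnothing$ iff $L(R)=\varnothing$ for a one-way $\NPA$ $R$ with $\O(n^{2k})$ states, the same weight vectors and the same acceptance constraint $\psi$ as $P$. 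Note $R$ is exponentially larger than $P$ and cannot be built explicitly in polynomial space; instead we apply Lemma~\ref{lem:one-way_to_formula} to $R$ conceptually: $L(R)\ne\varnothing$ iff the existential Presburger formula $\varphi(x)=\bigvee_{i=1}^m\varphi_i(x)$ is satisfiable (projecting out $x$, or equivalently $\exists x\,\varphi(x)$). Since each $\varphi_i$ has size polynomial in $|R|$, hence polynomial in $|P|$ and $k\le n$, and $\log_2 m$ is likewise polynomially bounded, satisfiability of $\varphi$ reduces to: does there exist $i\le m$ with $\varphi_i$ satisfiable? Each individual $\exists x\,\varphi_i(x)$ is an existential Presburger formula of size polynomial in $|P|$, whose satisfiability is in \NPTime\ (hence in \PSpace); and the index $i$ ranges over a set of exponential size but representable with polynomially many bits, so a nondeterministic polynomial-space (equivalently, by Savitch, polynomial-space) procedure guesses $i$ and verifies satisfiability of $\varphi_i$. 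The subtle point is that we never materialise $R$ or $\varphi$; we need that each $\varphi_i$ is computable in \PSpace\ from $P$ and the index $i$ — this should follow from the explicit description of the $\varphi_i$ in the proof of Lemma~\ref{lem:one-way_to_formula} applied to the (implicitly described) automaton $R$, whose states and transitions are themselves polynomial-space describable objects (tuples of at most $k$ transitions of $P$). This is the step I expect to be the main obstacle: one must check that the construction of Lemma~\ref{lem:one-way_to_formula}, combined with the $\#$-splitting of Lemma~\ref{lem:clever_two-way_to_one-way}, admits a \PSpace\ "on-the-fly" implementation rather than merely an exponential-time one, and that the resulting word-length witness $w\in L(R)$ (after erasing $\#$) indeed certifies $L(P)\ne\varnothing$.

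For the $k$-visit case with $k$ a fixed constant, the same reduction gives a one-way $\NPA$ $R$ with $\O(n^{2k})=\poly(n)$ states and the same $\psi$, so $R$ is of polynomial size, and non-emptiness of a one-way $\NPA$ is \NPTimeC\ by~\cite{DBLP:conf/lics/FigueiraL15}; composing with the polynomial-time reduction yields the \NPTime\ upper bound.

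For the lower bounds: \NPTime-hardness in the fixed-$k$ case already holds for one-way $\NPA$ (the case $k=1$), by~\cite{DBLP:conf/lics/FigueiraL15}, and one-way $\NPA$ are a special case of $1$-visit $\twoNPA$. For \PSpace-hardness of bounded-visit $\twoNPA$, I would reduce from non-emptiness (equivalently, the membership/acceptance problem on a fixed input, or intersection non-emptiness of DFAs) of two-way finite automata, which is \PSpaceC; a $\twoNFA$ is trivially a bounded-visit $\twoNPA$ of dimension $0$ (or with trivial constraint $\psi=\mathit{true}$), and by Proposition~\ref{prop:visit} it is $n$-visit. Alternatively, as noted in the related-work discussion, one reduces from non-emptiness of the intersection of $n$ \NPA, which is \PSpaceC\ by~\cite{DBLP:conf/lics/FigueiraL15} and is simulated by a sweeping $n$-bounded $\twoNPA$ that runs the $n$ automata one after another in successive left-to-right / right-to-left passes, accumulating all their counters and taking $\psi$ to be the conjunction of the $n$ acceptance constraints. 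Either reduction is polynomial-time, giving \PSpace-hardness; together with the upper bound this yields \PSpaceC, completing the proof.
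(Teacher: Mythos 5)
There is a genuine gap in your \PSpace upper bound, precisely at the step you flag as ``the main obstacle''. Your algorithm must, given $P$ and an index $i$, construct the disjunct $\varphi_i$ in polynomial space. But $\varphi_i$ is built from the base and period vectors $\vec{b}_i,\vec{p}_{i,j}$ of the $i$-th linear set in the semi-linear representation of the Parikh image of the exponential-size one-way automaton $R$, and Theorem~\ref{thm:phdLin} only guarantees that the \emph{whole} family of these vectors can be computed in exponential \emph{time}; neither it nor Lemma~\ref{lem:one-way_to_formula} provides on-demand, polynomial-space access to the $i$-th linear set, so ``this should follow from the explicit description'' does not in fact follow from the cited results. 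The paper sidesteps this entirely: it never materialises any $\varphi_i$. It uses only the \emph{existence} and polynomial size of the disjuncts, together with Scarpellini's small-solution theorem (Theorem~6(A) of~\cite{scarpellini}) applied to $\varphi_i(\ell)$ --- which speaks about the \emph{length} of accepted words --- to conclude that if $L(P)\neq\varnothing$ then some word of length $N$ at most exponential in $|P|$ is accepted. The actual algorithm is then an \NPSpace procedure (hence \PSpace by Savitch) that guesses a word of length at most $N$ and a run of the on-the-fly-computed automaton of Lemma~\ref{lem:clever_two-way_to_one-way} on it, maintaining a binary length counter and the running vector sum $\vec{v}$, and finally checks $\vec{v}\models\psi$ by substituting polynomial-size binary encodings of the entries of $\vec{v}$ (terms built with $\times_2$) into $\psi$ and deciding the resulting closed existential formula in \NPTime. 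To salvage your formula-guessing route you would have to prove poly-space computability of the $i$-th linear set, an additional nontrivial claim; otherwise the small-witness-plus-on-the-fly-guessing argument is the missing idea.

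A secondary slip concerns the \PSpace-hardness: your first reduction assumes a $\twoNFA$ is ``trivially a bounded-visit $\twoNPA$'', but a nondeterministic two-way automaton need not be bounded-visit (its accepting runs may revisit positions arbitrarily often), and Proposition~\ref{prop:visit} applies only to automata already known to be bounded-visit, so the produced instance may fall outside the class. The paper instead reduces from the emptiness of \emph{deterministic} two-way automata encoding the \PSpaceC intersection problem for $n$ \DFA~\cite{Dexter}; determinism forces accepting runs to be $n$-visit, so the instance is legitimately bounded-visit. Your second alternative (a sweeping simulation of the intersection, one pass per automaton) is correct and matches the paper's related-work remark, and your fixed-$k$ \NPTime argument coincides with the paper's.
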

\begin{proof}
	Consider a $k$-visit $\twoNPA$ $P = (A, \lambda, \psi)$ of dimension $d$.
	We start with the \PSpace membership of the non-emptiness problem for $\twoDFA$.
	Intuitively, we first want to apply Lemma~\ref{lem:clever_two-way_to_one-way} in order to deal with a one-way automaton, and apply then Lemma~\ref{lem:one-way_to_formula} to reduce the non-emptiness problem of the one-way Parikh automaton to the satisfiability of an existential Presburger formula.
	Nevertheless, we cannot explicitly transform $P$ into a one-way automaton while keeping polynomial space.
	So, in the sequel, $(i)$ we highlight an upper bound on the	smallest witness of non-emptiness
	and based on it, $(ii)$ we provide an \NPSpace algorithm which decides if there exists such a witness.
	
	$(i)$
	By Lemma~\ref{lem:one-way_to_formula} applied on the $\NPA$ obtained from Lemma~\ref{lem:clever_two-way_to_one-way}, there exists an existential Presburger formula $\varphi(\ell) = \bigvee_{i=1}^m \varphi_i(\ell)$ where each $\varphi_i$ is polynomial in $|P|$.
	This formula is satisfiable iff there exists $w \in \Sigma^{|\ell|}$ such that $w \in L(P)$.
	By Theorem~6~(A) of~\cite{scarpellini}, there exists $N$ exponential in $|\varphi_i|$ such that $\varphi_i$ is satisfiable iff $\varphi_i(\ell)$ holds for some $0 \leq \ell \leq N$.
	Hence, there exists $N$ exponential in $|P|$ such that $\min\{ |u| \mid u \in L(P) \} \leq N$.
	
	$(ii)$ The algorithm guesses a witness $u$ of length at most $N$ on-the-fly and a run on it.
	It controls its length by using a binary counter: as $N$ is exponential in $|P|$, the memory needed for that counter is polynomial in $|P|$.
	The transitions of the one-way automaton obtained from Lemma~\ref{lem:clever_two-way_to_one-way} can also be computed on-demand in polynomial space.
	Eventually, it suffices to check the last state is accepting
        and the sum $\vec{v} = (v_1,\dots,v_d)$ of the vectors
        computed on-the-fly along the run, satisfies the
        Presburger formula $\psi(x_1,\dots,x_d)$. To do so, our
        algorithm constructs a closed formula $\psi^{\vec{v}}$ in polynomial
        time such that $\psi^{\vec{v}}$ is true iff $\vec{v}\models
        \psi$. To do so, it hardcodes the values of $\vec{v}$
        in $\psi$ by substituting each $x_i$ by a term $t_{v_i}$ of size
        $(\text{log}_2(v_i))^2$ encoding $v_i$, by using the function
        symbol $\times_2$. E.g. $t_{13} =
        \times_2(\times_2(\times_2(1))) + \times_2(\times_2(1)) + 1$. 
        Let us argue that $\psi^{\vec{v}}$ has polynomial size. Let $\mu$ be the maximal absolute entry
        of vectors of $P$,
        then $v_i\leq \mu N$, and since $N$ is
        exponential in $|P|$, $t_{v_i}$ has polynomial size in $|P|$ and
        $\log_2(\mu)$. Hence $\psi^{\vec{v}}$ has polynomial size, and
        its satisfiability can be checked in \NPTime~\cite{scarpellini}.

        The lower bound is direct as it already holds for the
        emptiness problem of deterministic two-way automata, by a trivial
        encoding of the \PSpaceC intersection problem of $n$ $\DFA$~\cite{Dexter}.

	When $k$ is fixed, then the conversion to a one-way automaton (Lemma~\ref{lem:clever_two-way_to_one-way}) is polynomial.
	Then, the result follows from the \NPTimeC result for the non-emptiness of $\NPA$~\cite{DBLP:conf/lics/FigueiraL15}.
\end{proof}

\begin{remark}
	In~\cite{DBLP:conf/lics/FigueiraL15}, non-emptiness is shown to be
        polynomial time for $\NPA$ when the dimension is fixed, the
        values in the vectors are unary encoded and the semi-linear
        constraint is period-base represented. 
	As a consequence, for all fixed $d,k$, the non-emptiness problem
        for $k$-visit $\twoNPA$ with vectors in $\{0,1\}^d$ and a
        period-base represented semi-linear constraint can be
        solved in \PTime.
\end{remark}


	\section{Closure properties and comparison problems}\label{sec:closure}

Since the class of $\twoDPA$ is equivalent to the class of $\UPA$ that is known to be closed under Boolean operations \cite{DBLP:conf/dlt/CadilhacFM12, KlaRue03}, we get the closure properties of $\twoDPA$ for free, although with non-optimal complexity.
We show here that they can be realised in linear-time for intersection and union, and with linear state-complexity for the complement.

\begin{theorem}[Boolean closure]\label{thm:twoDPAclosure}
  Let $P, P_1, P_2$ be $\twoDPA$ such that $P = (A, \lambda, \psi)$.
  One can construct a $\twoDPA$ $\overline{P} = (A',\lambda',\psi')$ such that $L(\overline{P}) = \overline{L(P)}$ and the size of $A'$ is linear in the size of $A$.
  One can construct in linear-time a $\twoDPA$ $P_\cup$ (resp. $P_{\cap}$) such that $L(P_\cup) = L(P_1)\cup L(P_2)$ (resp.\ $L(P_\cap) = L(P_1)\cap L(P_2)$).
\end{theorem}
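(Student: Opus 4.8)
The plan is to keep determinism throughout and assemble the target automata by sequentially composing copies of the given ones; the single real obstacle is that a run of a deterministic two-way automaton may be \emph{infinite}, which counts as rejection and must therefore be turned into acceptance by a complement machine. To deal with this I would use the standard fact that a deterministic two-way automaton with $n$ states can be transformed, in linear time, into an equivalent one with $\O(n)$ states that halts on every input; applied to the underlying deterministic automaton $A$ (resp.\ $A_1$, $A_2$) this yields a halting deterministic two-way automaton $A^h$ (resp.\ $A_1^h,A_2^h$) for the same language (the weights and $\psi$ being ignored here).

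\emph{Complement.} Let $P=(A,\lambda,\psi)$ have dimension $d$, and write $\rho_w$ for the unique maximal run of $A$ on $w$ (finite or infinite). Then $w\notin L(P)$ holds exactly when $\rho_w$ is not accepting, i.e.\ $w\notin L(A)$, or $\rho_w$ is accepting but $V(\rho_w)\not\models\psi$. Using that the complement of a semi-linear set is semi-linear, fix an existential Presburger formula $\psi^c$ whose models form $\Z^d$ minus the models of $\psi$. I would then build $A'$ as the sequential machine: run $A^h$; if it halts rejecting (so $w\notin L(A)$), accept, incrementing along the way a fresh coordinate $x_0$; if it halts accepting (so $\rho_w$ is finite and accepting), rewind the head to $\lword$ and run $A$ with its weight map $\lambda$ placed on coordinates $1,\dots,d$, keeping its accepting state. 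Taking $\psi'\equiv (x_0\ge 1)\lor\psi^c(x_1,\dots,x_d)$, the result is deterministic (after the routine normalisation of the glue transitions demanded by the head-movement conventions of Section~\ref{sec:model}) and recognises $\overline{L(A)}\cup\{w\mid w\in L(A),\,V(\rho_w)\not\models\psi\}=\overline{L(P)}$. Since only $\O(1)$ states are added on top of $A^h$ and $A$, the automaton $A'$ has $\O(|Q|)$ states and linearly many transitions, which is what the statement requires; the formula $\psi'$ may be large, but only $A'$ is constrained.

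\emph{Intersection.} For $P_\cap$ no halting transformation is needed. On input $w$: run $P_1$ with its weights on coordinates $1,\dots,d_1$; if it halts rejecting, halt rejecting; if its run is infinite, so is the composite run, which is correct since then $w\notin L(P_1)\supseteq L(P_1)\cap L(P_2)$; if it halts accepting, rewind the head to $\lword$ and run $P_2$ with its weights on coordinates $d_1{+}1,\dots,d_1{+}d_2$, halting accepting iff $P_2$ does. With $\psi_\cap\equiv\psi_1(x_1,\dots,x_{d_1})\land\psi_2(x_{d_1+1},\dots,x_{d_1+d_2})$ this is a deterministic \twoPA\ with $|Q_1|+|Q_2|+\O(1)$ states, obviously computable in linear time.

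\emph{Union.} Here an infinite run of $P_1$ must not block acceptance when $w\in L(P_2)$, so I would first consult the halting automata. The machine for $P_\cup$ runs $A_1^h$, then (after a head rewind) $A_2^h$, thereby determining whether $w\in L(A_1)$ and whether $w\in L(A_2)$, and halting rejecting if both answers are negative. Otherwise it re-runs from scratch, with head rewinds in between, those among $P_1,P_2$ for which $w\in L(A_i)$ holds (each such run terminates in an accepting state), accumulating $P_1$'s weight on coordinates $1,\dots,d_1$ and $P_2$'s on $d_1{+}1,\dots,d_1{+}d_2$, and setting fresh flag coordinates $b_1,b_2$ to record which of the two accumulated weights is meaningful; finally it halts accepting, with acceptance governed by $\psi_\cup\equiv (b_1\ge 1\land\psi_1(x_1,\dots,x_{d_1}))\lor(b_2\ge 1\land\psi_2(x_{d_1+1},\dots,x_{d_1+d_2}))$. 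This machine is deterministic and halting, of size $\O(|Q_1|+|Q_2|)$, and produced in linear time. The only genuine difficulty in the whole argument, present in the complement and in the union, is the correct handling of non-halting runs through the halting transformation; everything else is bookkeeping of coordinates and flags.
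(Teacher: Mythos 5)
Your proposal is correct, and for intersection it coincides with the paper's construction (simulate $P_1$ on the first $d_1$ coordinates, and only if it halts accepting, rewind and simulate $P_2$, with the conjunction $\psi_1\land\psi_2$; non-halting runs of $P_1$ are harmless there). Where you genuinely diverge is in how complement and union are organised. The paper derives complement from the decomposition $\overline{L(P)}=\overline{L(A)}\cup L(A,\lambda,\lnot\psi)$, using the linear-time halting complementation of $\twoDFA$~\cite{GMP07} and a restricted union construction whose first component always halts; it then obtains general union by De Morgan, $L(P_1)\cup L(P_2)=\overline{\overline{L(P_1)}\cap\overline{L(P_2)}}$, deliberately keeping the intermediate formulas universal so that the final negation is again existential and no quantifier elimination is needed, which is how linear time for union is achieved. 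You instead build both operations directly as multi-phase deterministic machines: for complement you first run the halting version $A^h$ to decide membership in $L(A)$ and, if positive, re-run $A$ to accumulate $V(\rho_w)$, accepting under $(x_0\ge 1)\lor\psi^c$ — semantically the same decomposition as the paper's, with the complementation of the semi-linear set (your $\psi^c$, the paper's quantifier-eliminated $\lnot\psi$) carrying the possible formula blow-up, which is admissible since only the automaton is required to be linear; for union you pre-decide membership in $L(A_1)$ and $L(A_2)$ via $A_1^h,A_2^h$, then re-run exactly the machines whose underlying word acceptance is guaranteed, recording flags $b_1,b_2$, and accept under $(b_1\ge1\land\psi_1)\lor(b_2\ge1\land\psi_2)$. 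This flag-based union avoids any negation of Parikh constraints, so it is linear time just like the paper's double-complement route, and is arguably more elementary and self-contained; the price is a slightly larger (still $\O(|Q_1|+|Q_2|)$) multi-pass machine and the routine glue needed to respect the head-movement and halting conventions of the model, which you correctly flag.
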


\begin{proof}
	Let us start by intersection, assuming $P_i = (A_i,\lambda_i,\psi_i)$ has dimension $d_i$.
	The automaton $P_\cap$ is constructed with dimension $d_1+d_2$.
	Then $P_\cap$ first simulates $P_1$ on the first $d_1$ dimensions (with weight vectors belonging to $\Z^{d_1}\times \{0\}^{d_2}$), and then, if $P_1$ eventually reaches an halting state, it stops if it is non-accepting and reject, otherwise it simulates $P_2$ on the last $d_2$ dimensions with vectors in $\{0\}^{d_1}\times \Z^{d_2}$, and accepts the word if the word is accepted by $P_2$ as well.
	The Presburger acceptance condition is defined as $\psi(\vec{x}_1, \vec{x}_2)=\psi_1(\vec{x}_1) \land \psi_2(\vec{x}_2)$.
	Note that if $P_1$ never reaches an halting state, then $P_\cap$ won't either, so the word is rejected by both automata.
	It is also a reason why this construction cannot be used to show closure under union: even if $P_1$ never reaches an halting state, it could well be the case that $P_2$ accepts the word, but the simulation of $P_2$ in that case will never be done.
	However, assuming that $P_1$ halts on any input, closure under union works with a similar construction.
	Additionally, we need to keep in some new counter $c$ the information whether $P_1$ has reached an accepting state: First $P_\cup$ simulates $P_1$, if $P_1$ halts in some accepting state, then $c$ is incremented and $P_\cup$ halts, otherwise $P_\cup$ proceeds with the simulation of $P_2$.
	The formula is then $\psi(\vec{x}_1, \vec{x}_2, c) = (c=1 \land \psi_1(\vec{x}_1)) \lor \psi_2(\vec{x}_2)$.
	
	So, we have closure under union in linear-time as long as $P_1$ halts on every input.
	This can be used to show closure under complement, using the following observation: $\overline{L(P)} = \overline{L(A)}\cup L(A,\lambda,\lnot \psi)$ and moreover, it is known that $\twoDFA$ can be complemented in linear-time into a $\twoDFA$ which always halts~\cite{GMP07}.
	The formula $\lnot \psi$ is universal since $\psi$ is
        existential. Then, $\lnot \psi$ could be converted into an
        equivalent existential formula
        using quantifier elimination~\cite{Cooper}.
	
	For the closure under union, we use the equality $L(P_1)\cup L(P_2) = \overline{\overline{L(P_1)}\cap \overline{L(P_2)}}$.
	It can be done in linear-time because the formulas for $\overline{P_1}$ and $\overline{P_2}$ are universal, and so is the formula for the $\twoDPA$ accepting $\overline{L(P_1)}\cap \overline{L(P_2)}$.
	By applying again the complement construction, we get an
        existential formula (without using quantifier eliminations). 
\end{proof}

Thanks to Theorem~\ref{thm:twoDPAclosure} and decidability of non-emptiness for $\twoDPA$, we easily get the decidability of the universality problem (deciding whether $L(P) = \Sigma^*$), the inclusion problem (deciding whether $L(P_1)\subseteq L(P_2)$), and the equivalence problem (deciding whether $L(P_1) = L(P_2)$) for $\twoDPA$.
The following theorem establishes tight complexity bounds.
It is a consequence of a more general result (Theorem~\ref{thm:comparisongen}) that we establish for Parikh automata with \emph{arbitrary} Presburger formulas in Section~\ref{sec:generalised}.

\begin{theorem}[Comparison Problems]\label{thm:comparisons}
	The universality, inclusion and equivalence problems are \coNExpTimeC for $\twoDPA$.
\end{theorem}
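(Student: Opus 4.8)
The plan is to derive both the upper and the lower bounds by leveraging the Boolean closure of $\twoDPA$ (Theorem~\ref{thm:twoDPAclosure}) together with the \PSpaceC non-emptiness result for bounded-visit $\twoNPA$ (Theorem~\ref{thm:PA_emptiness}), but being careful about the \emph{size blow-up} incurred along the way. For the upper bound, recall that all three problems reduce in linear time to a non-emptiness test: $L(P)=\Sigma^*$ iff $\overline{L(P)}=\varnothing$; $L(P_1)\subseteq L(P_2)$ iff $L(P_1)\cap\overline{L(P_2)}=\varnothing$; and equivalence is the conjunction of two inclusions. By Theorem~\ref{thm:twoDPAclosure}, complementation of a $\twoDPA$ is only linear in the size of the underlying $\twoDFA$, and intersection is linear-time, so the $\twoDPA$ whose emptiness we must test has size polynomial in the inputs. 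The subtlety is that Theorem~\ref{thm:PA_emptiness} is \PSpaceC \emph{in the size of its input}, which would give only \PSpace here; we instead need to track the exponential witness bound directly. Concretely, re-running the argument of Theorem~\ref{thm:PA_emptiness}: the $\twoDPA$ $P'$ built for the reduction is $n$-visit with $n$ polynomial in the inputs, Lemma~\ref{lem:clever_two-way_to_one-way} turns it into a one-way Parikh automaton $R$ with $\O(n^{2n})$ states, i.e.\ exponentially many, Lemma~\ref{lem:one-way_to_formula} then yields a disjunction $\varphi=\bigvee_{i=1}^m\varphi_i$ of existential Presburger formulas with $\log_2 m$ and each $|\varphi_i|$ polynomial in $|R|$, hence exponential in the original inputs. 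Satisfiability of each $\varphi_i$ is in \NPTime~\cite{scarpellini}, so guessing $i$ (using $\log_2 m$, i.e.\ exponentially many, bits) and a satisfying assignment of polynomial-in-$|R|$ size gives an \NExpTime procedure for non-emptiness of $P'$; since the reductions above are complementing non-emptiness, the comparison problems land in \coNExpTime.

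For the lower bound, I would reduce from the complement of the non-emptiness (equivalently, from the \coNExpTime-hard universality) problem for one-way \emph{deterministic} Parikh automata, which the introduction announces as a new result and which is the intended source of hardness here. Given a $\DPA$ $P$, testing $L(P)=\Sigma^*$ is already a universality instance for $\twoDPA$ (a $\DPA$ is a special case of a $\twoDPA$), so universality — and therefore inclusion (take $P_2$ arbitrary, $P_1$ universal) and equivalence (compare $P$ with a fixed $\twoDPA$ for $\Sigma^*$) — inherit \coNExpTime-hardness. Thus the real content of the lower bound is the \coNExpTime-hardness of $\DPA$ universality, which one establishes by encoding acceptance of an exponential-time-bounded (equivalently, via padding, an exponential-space-bounded nondeterministic alternation-free) Turing machine, or more directly by a reduction from the known \NExpTime-hardness of solvability of quadratic Diophantine systems / short bounded-size integer programming: a $\DPA$ can, on an input that encodes a candidate solution of exponential magnitude in unary-with-separators form, read off the components into its counters and check, at the end, a fixed system of linear (in)equalities expressed in the existential Presburger acceptance formula — so that the automaton accepts \emph{some} word iff the system has a solution, and therefore rejects \emph{no} word iff it has none, making universality \coNExpTime-hard. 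One must be careful that the encoding length is genuinely exponential in the formula/machine size (achieved by the binary encoding of numerals via $\times_2$ in the acceptance formula, combined with succinct description of the "address" structure of the candidate certificate) and that the $\DPA$ remains deterministic, which is fine since it merely scans a structured input and accumulates counter values deterministically.

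The main obstacle I expect is the lower bound, specifically nailing down \coNExpTime-hardness already at the level of one-way deterministic Parikh automata: determinism forbids the automaton from guessing the certificate, so the exponential-size witness has to live entirely in the \emph{input word}, and one must design the input encoding so that (a) its length can be exponential while the $\DPA$ and formula stay polynomial, (b) the $\DPA$ can deterministically parse it and load the relevant quantities into its finitely many counters, and (c) the acceptance formula can then verify the intended (e.g.\ tiling or Diophantine) constraints. The cleanest route is to reduce from a succinct \NExpTime-complete problem such as succinct 3-colourability or the membership problem for exponentially-bounded Turing machines, using the Parikh counters to tally, over a purposely long input, the number of local constraint violations and demanding in $\psi$ that this count be zero; the upper-bound direction, by contrast, is essentially bookkeeping on top of the machinery (Theorems~\ref{thm:twoDPAclosure}, \ref{thm:PA_emptiness}, Lemmas~\ref{lem:clever_two-way_to_one-way} and \ref{lem:one-way_to_formula}) already in place. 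Since the excerpt states that Theorem~\ref{thm:comparisons} will in fact be obtained as a corollary of the more general Theorem~\ref{thm:comparisongen} for arbitrary Presburger formulas, I would ultimately defer the full argument to that section, here only sketching the specialisation of the general bounds to the $\SIGMA_1$ (existential) case, where the dominating cost is exactly the \coNExpTime satisfiability/validity check combined with the exponential state blow-up of the crossing-section construction.
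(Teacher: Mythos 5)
There is a genuine gap in your upper bound, and it is exactly the point the paper's actual proof is designed to circumvent. You claim that, after complementing $P_2$ and intersecting with $P_1$ via Theorem~\ref{thm:twoDPAclosure}, "the $\twoDPA$ whose emptiness we must test has size polynomial in the inputs", and you then feed it into the machinery of Theorem~\ref{thm:PA_emptiness} (Lemmas~\ref{lem:clever_two-way_to_one-way} and~\ref{lem:one-way_to_formula} plus the \NPTime satisfiability bound of~\cite{scarpellini}). But complementation negates the acceptance constraint: $\lnot\psi_2$ is a \emph{universal} formula, and Theorem~\ref{thm:twoDPAclosure} only bounds the size of the underlying automaton $A'$ — to stay inside the class $\twoDPA$ (existential constraint) it invokes quantifier elimination, whose cost is not polynomial. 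So the product automaton either is not polynomial-size, or is not a $\twoDPA$ but a $\SIGMA_2$-constrained machine, and in the latter case Lemma~\ref{lem:one-way_to_formula} and the \NPTime result of~\cite{scarpellini} no longer apply to the glued formula. Note also that if your claim were literally true, Theorem~\ref{thm:PA_emptiness} would place inclusion and universality in \PSpace, contradicting the \coNExpTime-hardness you prove in the same breath; your subsequent move of "tracking the exponential witness bound directly" does not repair this, it just silently uses a weaker bound on an object whose size you mis-stated. The paper resolves this by working in the $\SIGMA_i/\PI_i$ hierarchy: complementation gives a $\PI_1$-$\twoDPA$ in genuinely linear time (Theorem~\ref{thm:gentwoDPAclosure}), the intersection has a $\SIGMA_2$ constraint after prenexing, and its emptiness is decided in $\PI_1^{\textsc{Exp}}=\coNExpTime$ by Theorem~\ref{thm:generalised_emptiness}; Theorem~\ref{thm:comparisons} is then the $i=1$ instance of Theorem~\ref{thm:comparisongen}.

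Your lower bound is also off target. You correctly observe that it suffices to prove \coNExpTime-hardness of universality for one-way $\DPA$, but your proposed route (encoding exponential Turing machine computations or quadratic Diophantine solvability so that "the automaton accepts some word iff the system has a solution") cannot work as stated: non-emptiness of $\NPA$ is \NPTimeC~\cite{DBLP:conf/lics/FigueiraL15}, so a polynomial-size $\DPA$ cannot have its non-emptiness capture an \NExpTime-hard problem, and in any case "accepts some word iff solvable" does not yield "universal iff unsolvable" (also, universality is not the complement of non-emptiness: $\DPA$ emptiness is in \coNPTime, while universality is \coNExpTime-hard — precisely because complementation flips the formula). The paper's hardness proof is much simpler: from a $\SIGMA_1$-formula $\Psi$ with free variables, build a one-state $\DPA$ over letters $a_i^+,a_i^-$ whose run value on $w$ is the valuation $\mu_w$, with acceptance constraint $\Psi$; then universality is exactly validity of $\Psi$, i.e.\ truth of a $\PI_2$-sentence, which is \coNExpTimeC by~\cite{Haase14}. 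No machine or tiling encoding is needed, and the same reduction gives $\PI_i^{\textsc{Exp}}$-hardness for every $i$.
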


Finally, we study the membership problem which asks given a Parikh
automaton $P$ and a word $w \in \Sigma^*$, whether $w \in
L(P)$. Hardness was known already for
$\NPA$~\cite{DBLP:conf/lics/FigueiraL15}.

\begin{theorem} \label{thm:membership}
	The membership problem for $\twoNPA$ is \NPTimeC.
\end{theorem}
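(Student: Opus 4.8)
The plan is to establish the two bounds separately. For the \NPTime upper bound, the key observation is that, even though a two-way run may be long, a \emph{$k$-visit} restriction is not assumed here, so I cannot bound the run length by $k|w|$ directly. However, membership of a fixed word $w$ (given as part of the input) is still in \NPTime: I would guess, nondeterministically, a run of $A$ on $\lword w \rword$ \emph{together with} the sum $\vec{v}$ of the weight vectors along that run. The subtlety is the length of the run. A standard argument for two-way automata shows that if $w$ is accepted, then it is accepted by a run that never repeats a configuration, i.e.\ of length at most $|Q|\cdot(|w|+2)$, which is polynomial in $|P|+|w|$; any longer run contains a loop on the \emph{same} configuration, and for a Parikh automaton such a loop can be cut out provided it contributes the zero vector, or iterated/kept otherwise. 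More carefully: take a shortest run $\rho$ among all accepting runs whose value satisfies $\psi$; if $\rho$ repeats a configuration, the infix between the two occurrences can be removed, strictly shortening $\rho$ while changing $V(\rho)$ by some amount — this does not preserve membership in general, so instead I argue on the set of \emph{reachable (configuration, partial-sum-equivalence-class) pairs}, but the partial sums are unbounded. The cleanest route is: reduce to Lemma~\ref{lem:one-way_to_formula}'s circle of ideas, or more simply, observe that a run on a fixed $w$ is a path in the configuration graph $G_w$ of $A$ on $w$, which has polynomially many vertices; accepting with value in $\psi$ amounts to asking whether there is a path from the initial to a final configuration whose edge-label sum lies in the semi-linear set $\sem{\psi}$. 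This is exactly the non-emptiness of a (one-way, in fact) Parikh automaton whose underlying graph is $G_w$, and that problem is in \NPTime by~\cite{DBLP:conf/lics/FigueiraL15}. Concretely: build a \NPA{} $P_w$ whose states are the configurations of $A$ on $\lword w\rword$, with an $\varepsilon$-like transition $c \to c'$ carrying vector $\lambda(t)$ whenever $A$ has transition $t$ moving $c$ to $c'$; then $w \in L(P)$ iff $L(P_w)\neq\varnothing$, and $|P_w|$ is polynomial in $|P|+|w|$, so by~\cite{DBLP:conf/lics/FigueiraL15} this is decidable in \NPTime.

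For the \NPTime lower bound, hardness already holds for one-way $\NPA$ by~\cite{DBLP:conf/lics/FigueiraL15}, and since every $\NPA$ is in particular a $\twoNPA$, the lower bound transfers verbatim. (Alternatively, one reduces directly from an \NPTime-hard problem such as subset-sum / satisfiability of existential Presburger with the input word encoding the instance; but citing the known $\NPA$ result is the economical choice.)

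I expect the main obstacle to be the upper bound, specifically justifying that one may restrict to polynomially-long runs despite the unbounded partial sums: the naive ``cut repeated configurations'' argument fails because it perturbs the accumulated vector. The fix is to \emph{not} bound the run length at all but instead encode the existence of \emph{any} accepting run-with-value-in-$\psi$ as Parikh-automaton non-emptiness over the polynomial-size configuration graph $G_w$, and then invoke the \NPTimeC non-emptiness result of~\cite{DBLP:conf/lics/FigueiraL15}; this sidesteps the length issue entirely since that algorithm internally handles the unbounded Parikh image via an existential Presburger encoding. One must check that $G_w$ together with the weight labelling and the \emph{same} formula $\psi$ has size polynomial in $|P|+|w|$ — the number of configurations is $|Q|\cdot(|w|+2)$, the number of transitions is at most $|\Delta|\cdot(|w|+2)$, the weight vectors are reused (so $|\range|$ is unchanged), and $\psi$ is copied verbatim — which is immediate from the representation-size conventions fixed in Section~\ref{sec:model}.
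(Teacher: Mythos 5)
Your proposal is correct and follows essentially the same route as the paper: it reduces membership to non-emptiness of a one-way \NPA{} $P_w$ built on the polynomially-sized configuration graph of $A$ on $\lword w\rword$, keeping the weight vectors and the formula $\psi$ verbatim, and then invokes the \NPTimeC non-emptiness result of~\cite{DBLP:conf/lics/FigueiraL15}, with hardness inherited from one-way $\NPA$ membership. The preliminary detour about bounding run lengths is correctly discarded in your own text, so it does not affect the final argument.
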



	\section{Parikh automata with arbitrary Presburger acceptance condition} \label{sec:generalised}

In this section, we consider Parikh automata where the acceptance constraint is given as an arbitrary Presburger formula, that is, not restricted to existential Presburger formula, and we study the complexity of their decision problems.
For all $i > 0$, a two-way $\SIGMA_i$-Parikh automaton ($\SIGMA_i$-$\twoNPA$ for short) is a tuple $P = (A, \lambda, \Psi)$ where $A,\lambda$ are defined just as for $\twoNPA$ and $\Psi\in\SIGMA_i$.
In particular, a $\SIGMA_{1}$-$\twoNPA$ is exactly a $\twoNPA$.
Similarly, we also define $\SIGMA_{i}$-$\DPA$, $\SIGMA_{i}$-$\twoDPA$, $\SIGMA_{i}$-$\NPA$ respectively, and their $\PI_i$ counterpart (when the formula is in $\PI_i$).

The complexity of Presburger arithmetic has been connected to the weak \ExpTime hierarchy~\cite{Hemachandra89, Haase14} which resides between \NExpTime and \ExpSpace is defined as $\bigcup_{i \geq 0}{\SIGMA^{\textsc{Exp}}_i}$ where:
$$
	\begin{array}{lll}
		\SIGMA^{\textsc{P}}_0 \new \PI^{\textsc{P}}_0 \new \PTime
	&
		\SIGMA^{\textsc{P}}_{i+1} \new \NPTime^{\SIGMA^\textsc{P}_i}
	&
		\PI^{\textsc{P}}_{i+1} \new \coNPTime^{\SIGMA^\textsc{P}_i}
	\\
		\SIGMA^{\textsc{Exp}}_0 \new \PI^{\textsc{Exp}}_0 \new \ExpTime
	&
		\SIGMA^{\textsc{Exp}}_{i+1} \new \NExpTime^{\SIGMA^\textsc{P}_i}
	&
		\PI^{\textsc{Exp}}_{i+1} \new \coNExpTime^{\SIGMA^\textsc{P}_i}
	\end{array}
$$



Since Lemma~\ref{lem:one-way_to_formula} uses the acceptance constraint as a black box, we can generalise it as follows.

\begin{lemma}\label{lem:generalised_one-way_to_formula}
	For any fixed $i \in \N_{\neq 0}$, given a $\SIGMA_{i}$-$\NPA$ $P$ with $n$ states and $\gamma$ distinct weight vectors,
	we can construct a $\SIGMA_{i}$-formula $\Phi$ such that for all $\ell\in \N$ we have that $\Phi(\ell) = \bigvee_{j=1}^m \Phi_j(\ell)$ holds iff there exists $w \in L(P) \cap \Sigma^{|\ell|}$.
	Furthermore, $\log_2(m)$ and size of each $\Phi_j$ are 
	$\poly(|P|, \log n)$,
	and can be constructed in time $2^{\O(\gamma^2\log(\gamma n))}$.
\end{lemma}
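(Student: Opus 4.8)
The plan is to revisit the proof of Lemma~\ref{lem:one-way_to_formula} and observe that the acceptance constraint enters the construction only as a single positive top-level occurrence, with its free variables tied to the ``flow'' variables that encode a run. Recall the shape of what that proof produces from a one-way Parikh automaton $P=(A,\lambda,\psi)$ with $n$ states and $\gamma$ distinct weight vectors: a disjunction $\varphi(x)=\bigvee_{j=1}^m \Phi_j(x)$ in which each disjunct has the form
\[
\Phi_j(x)\ =\ \exists \vec y\ \exists \vec w\ \Big(\alpha_j(\vec y,x)\ \wedge\ \bigwedge_{k=1}^{d} \big(w_k = \textstyle\sum_{t\in\Delta}\lambda(t)_k\, y_t\big)\ \wedge\ \psi(w_1,\dots,w_d)\Big),
\]
where $\vec y$ records how many times each transition is used (subject to Euler/Kirchhoff balance plus a connectivity witness), $x=\sum_t y_t$ is the induced word length, $\vec w$ records the accumulated weight vector, $\alpha_j$ is quantifier-free (or purely existential), and the index $j$ ranges over a set of size $m$ with $\log_2 m=\poly(|P|,\log n)$ enumerable in time $2^{\O(\gamma^2\log(\gamma n))}$. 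The equalities linking $\vec w$ to $\vec y$ are arranged, exactly as in Lemma~\ref{lem:one-way_to_formula}, so as to accommodate negative entries of the weight vectors; what matters is that $\psi$ itself is appended verbatim and is never inspected. I would simply re-run this construction with $\psi$ replaced by the given $\SIGMA_i$ formula $\Psi$, set $\Phi := \bigvee_{j=1}^m \Phi_j$, and inherit the correctness statement of Lemma~\ref{lem:one-way_to_formula} (whose proof uses only that $\psi$ \emph{defines} the acceptance set, not its syntax): $\Phi(\ell)$ holds iff $L(P)\cap\Sigma^{|\ell|}\neq\varnothing$.

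The substantive point is to check that the resulting formula stays in $\SIGMA_i$. Write $\Psi=\exists\vec z_1\,\forall\vec z_2\cdots\Omega_i\vec z_i\ \chi(\vec z_1,\dots,\vec z_i,w_1,\dots,w_d)$ with $\chi$ quantifier-free. Substituting $\Psi$ for $\psi$ in the display above and pulling the leading $\exists\vec z_1$ outward past $\alpha_j$ and the linking equalities (none of which mention $\vec z_1$) gives
\[
\Phi_j(x)\ =\ \exists\vec y\ \exists\vec w\ \exists\vec z_1\ \forall\vec z_2\cdots\Omega_i\vec z_i\ \Big(\alpha_j(\vec y,x)\ \wedge\ \textstyle\bigwedge_k\big(w_k=\sum_t\lambda(t)_k y_t\big)\ \wedge\ \chi(\vec z_1,\dots,\vec z_i,\vec w)\Big);
\]
since $i\ge 1$ there is an outermost existential block of $\Psi$ into which $\vec y$, $\vec w$, and any existential quantifiers contributed by $\alpha_j$ (from the connectivity witness) can be merged, so the alternation depth is unchanged and $\Phi_j\in\SIGMA_i$. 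Hence $\Phi$ is a disjunction of $\SIGMA_i$ formulas of the required kind. For the size and time bounds: inserting the linking equalities enlarges the formula by at most a polynomial factor (each coefficient $\lambda(t)_k$ is binary-encoded within $|P|$), appending $\Psi$ costs $|\Psi|\le|P|$, and the quantifier re-grouping is linear; so each $\Phi_j$ has size $\poly(|P|,\log n)$, $\log_2 m$ is unchanged, and the whole family is produced within the same $2^{\O(\gamma^2\log(\gamma n))}$ bound, the only new work being a syntactic substitution.

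The step I expect to require the most care is the quantifier-structure verification, namely confirming from the proof of Lemma~\ref{lem:one-way_to_formula} that $\psi$ genuinely occurs \emph{positively} and \emph{outside the scope of every quantifier of} $\alpha_j$ (and of the connectivity encoding): if $\psi$ appeared under a negation the prenexing above would land in $\PI_i$ rather than $\SIGMA_i$, and if it appeared inside a universal quantifier of the run encoding the alternation would jump to $\SIGMA_{i+1}$. Inspecting that proof, $\psi$ is conjoined at the top level of each disjunct with its arguments bound only to existentially quantified flow counters, so the substitution is sound. Everything else is a transliteration of the $\SIGMA_1$ argument with $\SIGMA_1$ replaced by $\SIGMA_i$.
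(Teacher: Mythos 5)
Your proposal is correct and follows essentially the same route as the paper, which proves this lemma simply by observing that the construction of Lemma~\ref{lem:one-way_to_formula} uses the acceptance constraint $\psi$ as a black box: there $\psi$ occurs exactly once, positively, as a top-level conjunct whose arguments are bound only by outer existential quantifiers, so substituting a $\SIGMA_i$-formula and merging its leading existential block keeps each disjunct in $\SIGMA_i$ with the same size and time bounds. Your description of the disjuncts (flow variables with Kirchhoff/connectivity constraints) differs cosmetically from the paper's actual encoding via Lin's base--period decomposition of the Parikh image, but the structural property you rely on holds verbatim in that construction, so the argument goes through unchanged.
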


Using Lemma \ref{lem:generalised_one-way_to_formula},
we can extend Theorem~\ref{thm:PA_emptiness} to bounded-visit $\SIGMA_{i+1}$-$\twoNPA$.
Note that the case of $\SIGMA_{1}$-$\twoNPA$ is not covered by the following statement.

\begin{theorem} \label{thm:generalised_emptiness}
	For any fixed $i \in \N_{\neq 0}$, the non-emptiness problem for bounded-visit $\SIGMA_{i+1}$-$\twoNPA$ is \textsc{$\SIGMA_i^\textsc{Exp}$-C}.
\end{theorem}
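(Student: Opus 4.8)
The plan is to lift Theorem~\ref{thm:PA_emptiness} to the $\SIGMA_{i+1}$ setting, replacing Lemma~\ref{lem:one-way_to_formula} by its generalisation Lemma~\ref{lem:generalised_one-way_to_formula} and replacing the \NPTime satisfiability test for $\SIGMA_1$-formulas by the corresponding oracle-based complexity for $\SIGMA_{i+1}$-formulas. For the upper bound, I would proceed exactly as in the proof of Theorem~\ref{thm:PA_emptiness}: given a $k$-visit $\SIGMA_{i+1}$-$\twoNPA$ $P$, apply Lemma~\ref{lem:clever_two-way_to_one-way} (conceptually, not explicitly) to get a one-way $\SIGMA_{i+1}$-$\NPA$ with $\O(n^{2k})$ states, the same weight vectors and the same acceptance constraint $\Psi$, then apply Lemma~\ref{lem:generalised_one-way_to_formula} to obtain $\Phi(\ell)=\bigvee_{j=1}^m\Phi_j(\ell)$, a $\SIGMA_{i+1}$-formula, each $\Phi_j$ of size polynomial in $|P|$, which is satisfiable iff $L(P)\neq\varnothing$. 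A small-model argument (Theorem~6~(A) of~\cite{scarpellini} applied to the outermost existential block, treating the remaining $\SIGMA_i$ prefix as a black-box predicate) bounds the shortest witness $\ell$ by $2^{\O(\poly(|P|))}$, hence $\min\{|u|\mid u\in L(P)\}$ is at most exponential in $|P|$. The \NExpTime$^{\SIGMA_i^\textsc{P}}$ algorithm then guesses a witness word $u$ of that (exponential) length together with a run of the implicit one-way automaton on it, computing transitions on demand in exponential time; it accumulates the vector $\vec v$ of values along the run, hardcodes $\vec v$ into $\Psi$ using the $\times_2$ function symbol to get a closed $\SIGMA_{i+1}$-formula $\Psi^{\vec v}$ of size polynomial in the (exponentially large) witness, and queries a $\SIGMA_i^\textsc{P}$ oracle for its truth. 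Since truth of closed $\SIGMA_{i+1}$-formulas (equivalently, validity of $\SIGMA_i$-formulas after stripping the leading existential block by nondeterministic guessing) lies in $\NPTime^{\SIGMA_i^\textsc{P}}=\SIGMA_{i+1}^\textsc{P}$ by~\cite{Haase14}, and the whole procedure runs in nondeterministic exponential time with such an oracle, the problem is in $\NExpTime^{\SIGMA_i^\textsc{P}}=\SIGMA_i^\textsc{Exp}$.

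For the matching lower bound, I would reduce from satisfiability of a $\SIGMA_{i+1}$-Presburger formula over a single free variable, or more conveniently from a canonical $\SIGMA_i^\textsc{Exp}$-complete problem. Concretely, given an instance, I would build a bounded-visit $\SIGMA_{i+1}$-$\twoNPA$ (in fact, a one-way $\SIGMA_{i+1}$-$\NPA$, which is a fortiori bounded-visit) over a unary alphabet whose underlying automaton accepts $a^*$, whose single weight vector is $(1)$ so that the run on $a^\ell$ produces value $\ell$, and whose acceptance constraint is the given $\SIGMA_{i+1}$-formula $\Psi(x)$; then $L(P)\neq\varnothing$ iff $\exists\ell\ \Psi(\ell)$, i.e.\ iff $\Psi$ (prefixed with one more existential quantifier, still in $\SIGMA_{i+1}$) is satisfiable. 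Satisfiability of $\SIGMA_{i+1}$-formulas is $\SIGMA_{i+1}^\textsc{P}$-complete by~\cite{Haase14}, which only gives $\SIGMA_{i+1}^\textsc{P}$-hardness; to push this up to $\SIGMA_i^\textsc{Exp}$-hardness, the formula in the reduction must be generated succinctly from an exponentially larger one, exactly in the style of the $\coNExpTime$ lower bounds in Section~\ref{sec:closure} (Theorem~\ref{thm:comparisons}), using the classical succinct-encoding technique that a polynomial-size Presburger formula can describe an exponential-length computation. I would either invoke such a succinct reduction directly, or compose with the known characterisation that deciding truth of $\SIGMA_{i+1}$-formulas given in a succinct (e.g.\ circuit) encoding is $\SIGMA_i^\textsc{Exp}$-complete~\cite{Haase14}, transferring it through the trivial unary-alphabet automaton above.

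The main obstacle is making the complexity bookkeeping in the upper bound precise: one must verify that Lemma~\ref{lem:generalised_one-way_to_formula} really keeps the alternation depth at $i{+}1$ (the added universal/existential machinery from the crossing-section encoding and from the length-constraint formula must not introduce a spurious extra block), that the small-model bound survives when the innermost $\SIGMA_i$ prefix is treated as an uninterpreted predicate, and that substituting the exponentially large coordinates of $\vec v$ into $\Psi$ via $\times_2$ yields a formula whose size is polynomial in the witness length (hence exponential in $|P|$), so that the oracle query is of exponential size and the overall bound $\NExpTime^{\SIGMA_i^\textsc{P}}$ is respected rather than something larger. For the lower bound, the subtlety is ensuring the reduction is genuinely succinct so that one obtains $\SIGMA_i^\textsc{Exp}$-hardness and not merely $\SIGMA_{i+1}^\textsc{P}$-hardness; this is where the construction must mirror the succinct encodings already used elsewhere in the paper. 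Once these points are checked, upper and lower bounds meet and give \textsc{$\SIGMA_i^\textsc{Exp}$-completeness}.
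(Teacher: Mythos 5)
There is a genuine gap, and it originates in a misreading of \cite{Haase14}: deciding satisfiability (truth) of $\SIGMA_{i+1}$-Presburger sentences is not $\SIGMA_{i+1}^{\textsc{P}}$-complete but \textsc{$\SIGMA_i^{\textsc{Exp}}$-C} --- this is exactly the result the theorem leans on, and the mistake derails both directions of your argument. For the lower bound, the detour through succinct encodings is both unnecessary and unsupported: \cite{Haase14} contains no ``succinctly encoded formula'' variant for you to invoke, and none is needed, because your first, discarded reduction (take the given $\SIGMA_{i+1}$-sentence as the acceptance constraint of a trivial automaton; the paper uses dimension $0$ with $L(A)\neq\varnothing$, your unary-alphabet version works just as well) already transfers \textsc{$\SIGMA_i^{\textsc{Exp}}$}-hardness once the fragment's complexity is stated correctly. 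Your reading of the paper's other lower bounds (Theorem~\ref{thm:comparisons}, Theorem~\ref{thm:comparisongen}) as ``succinct-encoding'' arguments is also inaccurate: they too reduce directly from plain $\SIGMA_{i+1}$/$\PI_{i+1}$ sentences via \cite{Haase14}.

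For the upper bound, the oracle bookkeeping does not go through. First, after hardcoding $\vec v$ you must decide a closed $\SIGMA_{i+1}$-sentence; by \cite{Haase14} this is \textsc{$\SIGMA_i^{\textsc{Exp}}$-C}, so your claim that it lies in $\NPTime^{\SIGMA_i^{\textsc{P}}}=\SIGMA_{i+1}^{\textsc{P}}$ is unsupported (it would collapse a level of the weak \ExpTime hierarchy into the polynomial hierarchy), and a $\SIGMA_i^{\textsc{P}}$ oracle cannot absorb this step. Second, even granting that step, by the paper's own definitions $\NExpTime^{\SIGMA_i^{\textsc{P}}}=\SIGMA_{i+1}^{\textsc{Exp}}$, not $\SIGMA_i^{\textsc{Exp}}$, so the class you end with is one level too high and does not meet your lower bound. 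Third, the small-model bound of Theorem~6(A) of \cite{scarpellini} is proved for existential formulas; applying it while treating the inner quantifier prefix as an uninterpreted predicate is not licensed (you flag this yourself). The paper's proof needs none of this: it guesses one polynomial-size disjunct $\Phi_j$ of the formula from Lemma~\ref{lem:generalised_one-way_to_formula} (applied to the automaton of Lemma~\ref{lem:clever_two-way_to_one-way}) and then simulates on $\Phi_j$ the exponential-time alternating machine with $i$ alternations that decides $\SIGMA_{i+1}$-satisfiability per \cite{Haase14}; merging the guess into the first existential phase keeps the whole computation an $i$-alternating exponential-time machine, hence in $\SIGMA_i^{\textsc{Exp}}$, with no explicit witness word, no small-model bound, and no oracle calls. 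Your witness-guessing scheme could in principle be repaired (the residual sentence-truth check is itself in $\SIGMA_i^{\textsc{Exp}}$ and the initial guess can be merged into its nondeterministic exponential phase), but that repair is precisely the correct use of \cite{Haase14} that your write-up replaces with a polynomial-hierarchy oracle.
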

\begin{proof}
	For the upper-bound, we show that this problem can be solved by an alternating Turing machine in exponential time, which alternates at most $i$ times between sequences of non-deterministic and universal transitions, starting with non-deterministic transitions.
	By~\cite{Haase14}, the satisfiability of $\SIGMA_{i+1}$-formulas is complete for \textsc{$\SIGMA_i^\textsc{Exp}$-C}.
	Hence there is an $i$-alternating machine $\mathcal{M}$ running in exponential time which checks the satisfiability of such formulas. 
	Now, similar to the case of $\SIGMA_{1}$ in Theorem~\ref{thm:PA_emptiness}, from a bounded-visit $\SIGMA_{i+1}$-$\twoNPA$ $P$ one can construct a $\SIGMA_{i+1}$-formula which is true iff the automaton has a non-empty language.
	We can do so by applying Lemma~\ref{lem:generalised_one-way_to_formula} on the $\NPA$ obtained\footnote{Lemma~\ref{lem:clever_two-way_to_one-way} can be trivially adapted to $\SIGMA_{i}$-formula as acceptance condition} from Lemma~\ref{lem:clever_two-way_to_one-way}.
	Hence, non-emptiness of a bounded-visit $\SIGMA_{i+1}$-$\twoNPA$ reduces to satisfiability of a $\SIGMA_{i+1}$-formula $\Phi(\ell) = \bigvee_{j=1}^m \Phi_j(\ell)$ such that $\log_2(m)$ and the size of each $\Phi_j$ are polynomial in $|P|$ and can be constructed in time $2^{\O(\gamma^2\log(\gamma n))}$.
	However we cannot construct explicitly $\Phi$, since its size is exponential in $|P|$.
	Instead we construct an $i$-	alternating machine $\mathcal{M}'$ that first guesses a disjunct $\Phi_s$ and constructs it in exponential time, and then simulates the machine $\mathcal{M}$ on $\Phi_s$.
	Recall the $\mathcal{M}$ starts with non-deterministic transitions.
	Thus the machine $\mathcal{M}'$ runs in exponential time, and also performs only $i$ alternations, which provides \textsc{$\SIGMA_i^\textsc{Exp}$} upper bound.
	
	Hardness comes from checking if a $\SIGMA_{i+1}$-sentence holds true, which is \textsc{$\SIGMA_{i}^\textsc{Exp}$-C} by~\cite{Haase14}.
	From a $\SIGMA_{i+1}$-sentence $\Psi$ it suffices to construct a Parikh automaton $P = (A, \lambda, \Psi)$ of dimension $0$ such that $L(A) \neq \varnothing$, therefore $L(P)\neq \varnothing$ iff $L(P) = L(A)$ iff $\Psi$ holds.
\end{proof}

\begin{theorem}[Boolean closure] \label{thm:gentwoDPAclosure}
	Let $P,P_1,P_2$ be $\SIGMA_i$-$\twoDPA$.
	One can construct in linear time a $\PI_i$-$\twoDPA$ $\overline{P}$ and two $\SIGMA_i$-$\twoDPA$ $P_\cup,P_{\cap}$ such that $L(\overline{P}) = \overline{L(P)}$, $L(P_\cup) = L(P_1)\cup L(P_2)$ and $L(P_\cap) = L(P_1)\cap L(P_2)$.
\end{theorem}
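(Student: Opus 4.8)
The approach mirrors the proof of Theorem~\ref{thm:twoDPAclosure}, observing that nothing there required $\psi$ to be existential except the final clean-up step; since we now allow the acceptance constraint to lie anywhere in the quantifier hierarchy and we are content to land in $\SIGMA_i$ or $\PI_i$, that clean-up is no longer needed.

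First I would treat \textbf{intersection}. Given $P_j = (A_j,\lambda_j,\Psi_j)$ of dimension $d_j$ with $\Psi_j\in\SIGMA_i$, build $P_\cap$ of dimension $d_1+d_2$ exactly as before: it simulates $A_1$ writing weight vectors in $\Z^{d_1}\times\{0\}^{d_2}$, and upon $A_1$ reaching a halting state it rejects if that state is rejecting, otherwise it continues by simulating $A_2$ with vectors in $\{0\}^{d_1}\times\Z^{d_2}$. Determinism and the two-way control are inherited verbatim from Theorem~\ref{thm:twoDPAclosure}; the only new point is the acceptance formula $\Psi(\vec{x}_1,\vec{x}_2) = \Psi_1(\vec{x}_1)\wedge\Psi_2(\vec{x}_2)$, which is in $\SIGMA_i$ because $\SIGMA_i$ is closed under conjunction (prenexing two same-type formulas keeps the leading block existential and the alternation depth $i$). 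This is linear time.

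Next, \textbf{complement}. As in Theorem~\ref{thm:twoDPAclosure} I would use $\overline{L(P)} = \overline{L(A)} \cup L(A,\lambda,\lnot\Psi)$. The automaton $\overline{A}$ is obtained in linear time by the $\twoDFA$ complementation of~\cite{GMP07}, which moreover always halts; equip it with the trivially-true $\SIGMA_i$ formula. For $L(A,\lambda,\lnot\Psi)$ note that $\lnot\Psi$, with $\Psi\in\SIGMA_i$, is in $\PI_i$. So both summands are $\SIGMA_i$-$\twoDPA$ and $\PI_i$-$\twoDPA$ respectively; taking their union requires a union construction that does \emph{not} assume anything about which type the formulas have, beyond being of the form $\SIGMA_i$ or $\PI_i$. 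Here is the key: because $\overline{A}$ always halts, the union construction of Theorem~\ref{thm:twoDPAclosure} applies — $P_\cup$ first simulates $\overline{A}$, recording in a fresh counter $c\in\{0,1\}$ whether it halted accepting, then simulates $A$ — and the resulting formula is $(c=1)\lor(\lnot\Psi\wedge(\text{acceptance of }A))$, i.e. essentially $\Phi_1\lor\Phi_2$ with $\Phi_1\in\SIGMA_i$ (trivial) and $\Phi_2\in\PI_i$. The disjunction of a $\SIGMA_i$ and a $\PI_i$ formula is in $\PI_i$ when $i\ge 1$ (prenex: pad the $\SIGMA_i$ formula with a dummy universal outermost block so both start with $\forall$; the alternation depth stays $i$). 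Hence $\overline{P}\in\PI_i$-$\twoDPA$, as claimed, and the construction is linear.

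Finally, \textbf{union}. Apply the identity $L(P_1)\cup L(P_2) = \overline{\overline{L(P_1)}\cap\overline{L(P_2)}}$. By the complement step each $\overline{P_j}$ is a $\PI_i$-$\twoDPA$; intersecting two $\PI_i$-$\twoDPA$ works exactly as the intersection step above with $\SIGMA_i$ replaced by $\PI_i$ (closed under conjunction), giving a $\PI_i$-$\twoDPA$; complementing once more turns the $\PI_i$ constraint back into a $\SIGMA_i$ one, yielding a $\SIGMA_i$-$\twoDPA$ for $L(P_1)\cup L(P_2)$, in linear time. I would close by noting that $P_\cup$ obtained here agrees with the direct construction above; one can present whichever is cleaner.

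\textbf{Main obstacle.} The only genuine subtlety is bookkeeping on the Presburger side: making sure that at each step the constructed formula is syntactically $\SIGMA_i$ (resp. $\PI_i$) and not merely equivalent to one — i.e. that conjoining, disjoining, negating, and padding with dummy quantifier blocks never increases the alternation count past $i$. This is routine prenex manipulation, but it is where the ``linear time'' and the exact quantifier class both have to be checked; everything automata-theoretic is a direct transcription of Theorems~\ref{thm:twoDPAclosure} and the halting $\twoDFA$ complementation of~\cite{GMP07}.
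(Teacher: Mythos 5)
Your proposal follows the same route as the paper: the paper's proof of this theorem is precisely ``the constructions are the same as in the case $i=1$ (Theorem~\ref{thm:twoDPAclosure}), using closure of $\SIGMA_i$ under conjunction and disjunction and the fact that negating a $\SIGMA_i$-formula yields a $\PI_i$-formula'', and your intersection/complement/union steps (sequential product with conjoined constraints; $\overline{L(P)}=\overline{L(A)}\cup L(A,\lambda,\lnot\Psi)$ with the always-halting $\twoDFA$ complementation of~\cite{GMP07}; De~Morgan for union, whose second complementation lands back in $\SIGMA_i$) are exactly that.

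One caveat. Your stated justification in the complement step --- that the disjunction of a $\SIGMA_i$ and a $\PI_i$ formula is always in $\PI_i$ because one can ``pad the $\SIGMA_i$ formula with a dummy universal outermost block'' without increasing the alternation depth --- is false in general: prefixing a genuine $\SIGMA_i$ formula ($i$ alternating blocks starting with $\exists$) by a universal block yields $i+1$ blocks, i.e.\ a $\PI_{i+1}$ formula. The step nevertheless goes through in your construction because the $\SIGMA_i$ disjunct there is quantifier-free (the formula $c=1$, resp.\ the trivially true constraint), so the $\PI_i$ quantifier prefix of $\lnot\Psi$ can simply be pulled out over the disjunction, keeping $i$ blocks; it should be stated in that restricted form rather than as a general closure property. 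A similar remark applies to your closing sentence on union: the ``direct'' union construction is only available when the first automaton halts on every input, which is why the De~Morgan route (the one you actually use) is the right one for arbitrary $P_1,P_2$.
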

\begin{proof}
	The constructions are the same as in the proof of the case $i=1$ of Theorem~\ref{thm:twoDPAclosure}, using closure under disjunction and conjunction of $\SIGMA_i$ and the fact that negating a $\SIGMA_i$-formula yields a $\PI_i$-formula.
\end{proof}

\begin{theorem}[Comparison Problems] \label{thm:comparisongen}
	For all fixed $i \in \N_{\neq 0}$, the universality, inclusion and equivalence problems for $\SIGMA_i$-$\twoDPA$ are \textsc{$\PI_i^{\textsc{Exp}}$-C}.
\end{theorem}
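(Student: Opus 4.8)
\emph{Setup.} The plan is to reduce all three problems, in both directions, to the \emph{emptiness} problem for bounded-visit $\SIGMA_{i+1}$-$\twoNPA$: by Theorem~\ref{thm:generalised_emptiness} non-emptiness of bounded-visit $\SIGMA_{i+1}$-$\twoNPA$ is $\textsc{$\SIGMA_i^{\textsc{Exp}}$}$-complete, so emptiness is $\textsc{$\PI_i^{\textsc{Exp}}$}$-complete; and by Proposition~\ref{prop:visit} every $\twoDPA$ with $n$ states is $n$-visit, so every $\SIGMA_{i+1}$-$\twoDPA$ is bounded-visit. Since $i$ is fixed, $i+1$ is fixed as well, so Theorem~\ref{thm:generalised_emptiness} applies.

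\emph{Upper bound.} I would treat inclusion first. Given $\SIGMA_i$-$\twoDPA$ $P_1,P_2$, we have $L(P_1)\subseteq L(P_2)$ iff $L(P_1)\cap\overline{L(P_2)}=\varnothing$. By Theorem~\ref{thm:gentwoDPAclosure} one constructs in linear time a $\PI_i$-$\twoDPA$ $\overline{P_2}$ for $\overline{L(P_2)}$, and the intersection construction of Theorem~\ref{thm:gentwoDPAclosure}---which runs the two automata on disjoint blocks of counters and conjoins their acceptance constraints over disjoint variables---yields a $\twoDPA$ for $L(P_1)\cap\overline{L(P_2)}$ whose acceptance constraint has the form $\psi_1(\bar x)\wedge\psi_2(\bar y)$ with $\psi_1\in\SIGMA_i$, $\psi_2\in\PI_i$, and $\bar x,\bar y$ disjoint. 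Putting this formula in prenex form by interleaving the quantifier blocks of $\psi_1$ and $\psi_2$ (the first block of $\psi_1$ is existential and that of $\psi_2$ is universal, so the $j$-th block of $\psi_1$ merges with the $(j{-}1)$-th block of $\psi_2$, leaving one extra leading existential block) gives an equivalent $\SIGMA_{i+1}$-formula. Thus inclusion reduces in linear time to emptiness of a bounded-visit $\SIGMA_{i+1}$-$\twoNPA$, hence lies in $\textsc{$\PI_i^{\textsc{Exp}}$}$. Universality is the special case where $P_1$ is a trivial $\SIGMA_i$-$\twoDPA$ recognising $\Sigma^*$. Equivalence is the conjunction of two inclusion instances, and $\textsc{$\PI_i^{\textsc{Exp}}$}$, being the complement of a nondeterministic class, is closed under intersection; alternatively, one checks emptiness of a single $\SIGMA_{i+1}$-$\twoDPA$ recognising the symmetric difference $(L(P_1)\cap\overline{L(P_2)})\cup(\overline{L(P_1)}\cap L(P_2))$.

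\emph{Lower bound.} It suffices to prove $\textsc{$\PI_i^{\textsc{Exp}}$}$-hardness of universality, since universality reduces trivially to inclusion ($L(P)=\Sigma^*$ iff $L(P')\subseteq L(P)$ with $P'$ a trivial $\twoDPA$ for $\Sigma^*$) and to equivalence ($L(P)=\Sigma^*$ iff $L(P)=L(P')$). By~\cite{Haase14} and duality, deciding whether a $\PI_{i+1}$-sentence holds is $\textsc{$\PI_i^{\textsc{Exp}}$}$-complete. Given such a sentence, written $\Xi=\forall\bar x_1\,\xi'(\bar x_1)$ with $\xi'\in\SIGMA_i$, I would build a one-way $\SIGMA_i$-$\DPA$ $P$ over an alphabet containing two letters $a_j^{+},a_j^{-}$ for each variable $x_j$ of $\bar x_1$, accepting every word at the automaton level, whose (unique) run on $w$ has value $\bigl(|w|_{a_j^{+}}-|w|_{a_j^{-}}\bigr)_j$; as $w$ ranges over $\Sigma^*$ this value ranges over all of $\Z^{|\bar x_1|}$. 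Taking $\xi'$ as acceptance constraint, $L(P)=\Sigma^*$ iff every $\bar v\in\Z^{|\bar x_1|}$ satisfies $\xi'$, i.e.\ iff $\Xi$ holds.

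\emph{Main obstacle.} The one genuinely delicate point is the bookkeeping of quantifier alternation: one must verify that combining a $\SIGMA_i$- and a $\PI_i$-acceptance constraint lands in level exactly $\SIGMA_{i+1}$ (so that Theorem~\ref{thm:generalised_emptiness} yields precisely $\textsc{$\PI_i^{\textsc{Exp}}$}$ and nothing larger), and, dually, that stripping the outermost $\forall$-block of the $\PI_{i+1}$-sentence used for hardness leaves a $\SIGMA_i$- rather than a $\SIGMA_{i+1}$-constraint. The remainder is routine: using signed letters so that counter values range over all of $\Z$ in the reduction, and invoking the deterministic, always-halting automata constructions guaranteed by Theorem~\ref{thm:gentwoDPAclosure}.
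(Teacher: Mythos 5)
Your proposal is correct and follows essentially the same route as the paper: inclusion is reduced to emptiness of $L(P_1)\cap\overline{L(P_2)}$ via Theorem~\ref{thm:gentwoDPAclosure}, the conjoined $\SIGMA_i$/$\PI_i$ constraints over disjoint variables are interleaved into a $\SIGMA_{i+1}$-formula so that Theorem~\ref{thm:generalised_emptiness} gives the $\PI_i^{\textsc{Exp}}$ upper bound, and hardness of universality is shown exactly as in the paper by encoding valuations with signed letters $a_j^{+},a_j^{-}$ in a one-state deterministic automaton whose constraint is the $\SIGMA_i$-matrix of a $\PI_{i+1}$-sentence. The only nit is the reference to Proposition~\ref{prop:visit} for bounded-visitness of the product $\twoDPA$: that proposition bounds $k$ by $n$ for automata already known to be bounded-visit, whereas the fact that deterministic two-way machines are $n$-visit on accepting runs (no repeated configuration) is what is actually invoked, as stated in the introduction.
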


\begin{proof}
	We first prove the upper bound for the most general problem which is inclusion.
	Let $P_i = (A_i,\lambda_i,\psi_i)$ be a $\SIGMA_{i}$-$\twoDPA$.
	Note that $L(P_1) \subseteq L(P_2)$ iff $L(P_1) \cap \overline{L(P_2)} = \varnothing$.
	So, using Theorem~\ref{thm:gentwoDPAclosure} we first construct in linear-time a $\PI_i$-$\twoDPA$ $\overline{P_2} = (A_2', \lambda_2', \Psi'_2)$ such that $L(\overline{P_2}) = \overline{L(P_2)}$ and then $P_{\cap}=(A, \lambda, \Psi)$ such that $L(P_\cap) = L(P_1) \cap L(\overline{P_2})$.
	From the construction in Theorem~\ref{thm:twoDPAclosure} generalised to $\SIGMA_i$-$\twoDPA$, recall that the formula $\Psi$ is defined as $\Psi(\vec{x}_1, \vec{x}_2)=\Psi_1(\vec{x}_1) \land \Psi_2'(\vec{x}_2)$.
	Let
		$\Psi_1(\vec{x}_1) = \exists \vec{y}_1 \forall \vec{y}_2 \dots \Omega \vec{y}_i \left[
			\varphi_1(\vec{x}_1, \vec{y}_1, \dots, \vec{y}_i)
		\right]$, and
		$\Psi_2'(\vec{x}_2) = \forall \vec{z_1} \exists \vec{z_2}\dots \rotatebox[origin=c]{180}{$\Omega$} \vec{z_i} \left[	
			\varphi_2(\vec{x}_2, \vec{z}_1, \dots, \vec{z}_i)
		\right]$
	where $\Omega, \rotatebox[origin=c]{180}{$\Omega$} \in \{ \exists, \forall \}$ such that  $\Omega \neq \rotatebox[origin=c]{180}{$\Omega$}$.
	Hence $\Psi$ is equivalent to the following $\SIGMA_{i+1}$-formula.
	$$
		\exists \vec{y}_1 \forall \vec{z}_1 \forall \vec{y}_2 \exists \vec{z}_2 \exists \vec{y}_3 \dots \Omega \vec{z}_{i-1} \vec{y}_i \rotatebox[origin=c]{180}{$\Omega$} \vec{z}_i
		\Big[
			\varphi_1(\vec{x}_1, \vec{y}_1, \dots, \vec{y}_i) \land \varphi_2(\vec{x}_2, \vec{z}_1, \dots, \vec{z}_i)
		\Big]
	$$
	Finally, emptiness of $P_{\cap}$ can be decided in $\PI_i^{\textsc{Exp}}$ by Theorem~\ref{thm:generalised_emptiness}.

	For the lower bound, we show that the universality problem of $\SIGMA_{i}$-$\DPA$ is $\PI_i^{\textsc{Exp}}$-hard.
	This holds even for a fixed number of states and vector values in $\{-1,0,1\}$, showing that the complexity comes from the formula part.
	From a $\SIGMA_i$-formula $\Psi$ with $d$ free variables, we construct a Parikh automaton $P = (A,\lambda,\Psi)$ of dimension $d$ over alphabet $\Sigma = \{a^+_i, a^-_i\}_{1\leq i\leq d}$.
	Any word $w$ over $\Sigma$ defines a valuation $\mu_w(x_i) = |w|_{a^+_i} - |w|_{a^-_i}$ for all $1 \leq i \leq d$. Conversely, any valuation $\mu$ can be encoded as a word over $\Sigma$.
	Hence, $\Psi$ holds for all values iff for all $w\in\Sigma^*$, we have $\mu_w \models \Psi$.
	We construct a deterministic one-way automaton $A$ such that $L(A) = \Sigma^*$ and for all $w\in\Sigma^*$, the value of the run $r$ over $w$ is $\mu_w$.
	The automaton $A$ has one accepting and initial state $q$ over which it loops and, when reading $a_i^+$ (resp.\ $a_i^-$) it increases dimension $i$ by $1$ (resp.\ by $-1$).
\end{proof}

\begin{remark}
	Since a $\twoDPA$ is a $\SIGMA_1$-$\twoDPA$, and the class \coNExpTime is the same as $\PI_1^{\textsc{Exp}}$, we have that Theorem~\ref{thm:comparisongen} for $i=1$ is exactly the same as Theorem~\ref{thm:comparisons}.
\end{remark}

	\section{Conclusion} \label{sec:conclusion}
		In this paper, we have provided tight complexity bounds for the
emptiness, inclusion, universality and equivalence problems for
various classes of two-way Parikh automata. We have shown that when
the semi-linear constraint is given as a $\SIGMA_i$-formula, for
$i>1$, the complexity of those problems is dominated by the complexity
of checking satisfiability or validity of $\SIGMA_i$-formulas. We have
shown that $\twoDPA$ (resp.\ bounded-visit $\twoNPA$) have the same expressive
power as unambiguous (one-way) $\PA$ (resp.\ non-deterministic $\PA$). In
terms of succinctness, it is already known that $\twoDFA$ are
exponentially more succinct than $\NFA$, witnessed for instance by the
family $D_n = \{ uu\mid u\in\{0,1\}^*\land |u|=n\}$. However $D_n$ 
is accepted by a $\NPA$ with polynomially many states in $n$, using
$2n$ vector dimensions to store the letters of its input, then checked
for equality using the acceptance constraint. We conjecture that $\twoDPA$ are exponentially
more succinct than $\NPA$, witnessed by the language $L_n$ of
Section~\ref{sec:model}. We leave as future work the
introduction of techniques allowing to prove such results (pumping
lemmas), as the dimension and acceptance constraint size has to be
taken into account as well, as shown with $D_n$. 

Finally, we plan to extend the
pattern logic of~\cite{patternLogic}, which intensively uses (one-way)
Parikh automata for its model-checking algorithm, 
to reason about structural properties of
two-way machines, and use two-way Parikh automata emptiness checking
algorithm for model-checking this new logic.


	\newpage
	\appendix
	\section{Section~\ref{sec:model}: Two-way Parikh automata}

\begin{proof}[Proof of Theorem~\ref{thm:two-way_to_one-way} (continued)]
	We prove now that $L(R) \subseteq L(P)$.
	Consider $u \in L(R)$ and let $r$ be an accepting run of $R$ over $u$ with $s = c_1, \dots, c_m$ the sequence of states visited by $r$ to reach $\top$.
	By Lemma~\ref{lem:merging}, there exists an accepting run $\rho$ of $P$ over $u$ such that $\C(\rho) = s$.
	Moreover, $V(\rho) = \sum_{i=1}^m V(c_i) = V(r)$.
	Hence $u \in L(P)$ since $P$ have the same acceptance constraint as $R$.
	
	We prove now that $L(P) \subseteq L(R)$.
	Consider $u = a_1\dots a_m \in L(P)$ and let $\rho$ be an accepting two-way run of $P$ over $u$ with $\C(\rho) = c_1, \dots, c_m$ i.e.\ $c_i$ is the $a_i$-crossing sections of $\rho$.
	Since $\rho$ is accepting then $c_1$ is initial, $c_m$ is accepting and $(c_i, c_{i+1}) \in M$.
	Furthermore, the $k$-visitness of $P$ implies that each $c_i$ have length at most $k$.
	So, there exists an accepting run $r$ of $R$ over $u$ which visit the sequence of states $c_1, \dots, c_m, \top$.
	Moreover, $V(r) = \sum_{i=1}^m c_i = V(\rho)$.
	Hence $u \in L(R)$ since $R$ have the same acceptance constraint as $P$.
	
	We prove now that if $P$ is deterministic then $R$ is unambiguous by contrapositive.
	Let $r_1, r_2$ be two distinct accepting runs of $R$ over some word $u$ with $s_1$ and $s_2$ be the respective sequences of state states visited by $r_1$ and $r_2$ to reach $\top$.
	Since $r_1 \neq r_2$ then $s_1 \neq s_2$.
	By Lemma~\ref{lem:merging}, there exist $\rho_1, \rho_2$ two accepting runs of $P$ over $u$ such that $\C(\rho_1) = s_1$ and $\C(\rho_2) = s_2$.
	Furthermore $\C(\rho_1) \neq \C(\rho_2)$ implies that $\rho_1 \neq \rho_2$.
	Hence $P$ is not deterministic.
\end{proof}

\section{Section~\ref{sec:decision}: Emptiness Problem}

\begin{proof}[Proof of Theorem~\ref{thm:undectwoNPA}]
	We reduce the problem of deciding whether a system of diophantine equations $S$ over a finite set of variables $X=\{x_1,\dots,x_n\}$ has a solution in $\N$, which is known to be undecidable.
	Each equation is of the form $p_1=p_2$ where $p_1,p_2$ are polynomials over $X$, whose coefficient are assumed to be in $\N$.
	A valuation $\nu$ is mapping $\nu:X\rightarrow \N$.
	We denote by $\nu(p)$ the value of polynomial $p$ under valuation $\nu$.
	We first explain how to encode the values $\nu(p)$ for all $\nu$ as a language and show how to define it with a $\twoNPA$.

	Given a polynomial $p$, we let $sub(p)$ all the subpolynomials appearing in $p$, which is inductively defined by $sub(p_1+p_2) = sub(p_1)\cup sub(p_2) \cup \{p_1+p_2\}$, $sub(p_1 \times p_2) = sub(p_1)\cup sub(p_2)\cup \{p_1 \times p_2\}$, $sub(a) = \{a\}$ for $a\in\N$.

	Given a polynomial $p$ over $X = \{x_1,\dots,x_n\}$, we let $\Sigma_p = \{ 0_x\mid x\in X\}\cup \{1_{p'}\mid p'\in sub(p)\}$ be a finite alphabet.
	Note that if $p_2\in sub(p_1)$, then $\Sigma_{p_2}\subseteq \Sigma_{p_1}$.

	Given a word $w\in\Sigma_p^*$, we let $\nu_w$ the valuation $\nu_w(x) = |0_x|_w$.
	We say that $w$ is a \emph{$\nu$-encoding} of $p$ if $\nu = \nu_w$ and $\nu(p) = |1_p|_w$.
	A language $L\subseteq \Sigma_p^*$ is a \emph{good encoding} of $p$ if for all $\nu$ solution of $p$, there exists a $\nu$-encoding of $p$ in $L$ and conversely, any $w\in L$ is a $\nu$-encoding of $p$ for some $\nu$.
	We now show by induction on $p$ that there exists good encoding $L_p$ of $p$ definable by a $\twoNPA$ $A_p$.
	\begin{enumerate}
	\item
		if $p = a\in\N$ is a given constant, then we let $A_p$ be a finite automaton accepting any word $w\in \Sigma_a^*$ such that $|1_a|_w = a$.
		It has $a$ states.
		
	\item
		if $p=p_1+p_2$, then we let $A_{p_1}$ and $A_{p_2}$ be the two $\twoNPA$ constructed inductively on $p_1$ and $p_2$, assumed to be of dimension $d_1$ and $d_2$ respectively.
		Then, $A_p$ is constructed as follows: it works on alphabet $\Sigma_p$ and has dimension $d = d_1+d_2+3$.
		It first simulates $A_{p_1}$  on the first $d_1$ dimensions (with vector updates in $\Z^{d_1}\times \{0\}^{d_2+3}$), ignoring letters in $\Sigma_{p}\setminus \Sigma_{p_1}$ until it reaches an halting state $q$.
		If $q$ is rejecting, $A_p$ rejects, otherwise it goes back to the beginning of the word and simulates $A_{p_2}$ on next $d_2$ dimensions (with updates in $\{0\}^{d_1} \times \Z^{d_2}\times \{0\}^3$), ignoring letters in $\Sigma_{p}\setminus \Sigma_{p_2}$, until it reaches an halting state $q'$.
		If $q'$ rejects, $A_p$ rejects, otherwise it goes back to the beginning, count with a one-way pass the number occurrences of symbols $1_{p_1}$, $1_{p_2}$ and $1_{p_1+p_2}$ respectively in three counters $x_{p_1}$,  $x_{p_2}$ and $x_{p_1+p_2}$ corresponding to the last three dimensions The semi-linear condition is then given by the formula $\varphi(x_1,\dots,x_d) = \varphi_1(x_1,\dots,x_{d_1})\wedge \varphi_2(x_{d_1+1},\dots,x_{d_1+d_2})\wedge x_{p_1}+x_{p_2} = x_{p_1+p_2}$.
		By construction and induction hypothesis, $A_p$ is a good encoding of $p$.
		
		\item if $p = p_1 \times p_2$, then $A_p$ is on alphabet $\Sigma_p$ and has dimension $d_1+d_2+4$.
		Initially it works as $A_{p_1+p_2}$ during the two first phases (simulation of $A_{p_1}$ followed by simulation of $A_{p_2}$).
		After those two simulations, $A_p$ enters phase $2$, during which it makes $k$ passes over the whole input, where $k$ is chosen non-deterministically (by using the non-determinism of $\twoNPA$).
		On each of these passes, it counts the number of occurrences of symbol $1_{p_2}$ in some counter $x_{mult}$ (intended at the end to contain the value of $p_1 \times p_2$).
		At the end of each pass, it increments by one a counter $x_{pass}$.
		It non-deterministically decides to move to phase $3$ during which it also makes a last pass over the whole input to count the number of occurrences of $1_{p_1}$ and $1_{p_1\times p_2}$ in some counters $x_{p_1}$ and $x_{p_1+p_2}$ and accepts.
		The acceptance formula is then: $\varphi(x_1,\dots,x_d) = \varphi_1(x_1,\dots,x_{d_1})\wedge \varphi_2(x_{d_1+1},\dots,x_{d_1+d_2})\wedge x_{pass} = x_{p_1}\wedge x_{mult} = x_{p_1 \times p_2}$.
		If $A_p$ makes $k$ passes during phase $2$ on input $w$, then we know that the value of $x_{mult}$ is equal to $k \times \nu_w(p_2)$.
		The formula also requires that $k = \nu_w(p_1)$ which leads to the result.
\end{enumerate}

	To encode an equation $p_1 = p_2$, we first construct $A_{p_1}$ and $A_{p_2}$, then construct a $\twoNPA$ $A_{p_1=p_2}$ which first simulates $A_{p_1}$ and $A_{p_2}$ on input $w\in(\Sigma_{p_1}\cup \Sigma_{p_2})^*$, and then performs a last pass where it counts the number of occurrences of $1_{p_1}$ in some counter $x_{p_1}$, and similarly for $1_{p_2}$ in some counter $x_{p_2}$.
	The final formula also requires that $x_{p_1} = x_{p_2}$.
	Then $L(A_{p_1=p_2})\neq\varnothing$ iff there exists a solution to $p_1=p_2$.
	It can be easily generalised to a system of equations.
\end{proof}

To prove Lemma~\ref{lem:one-way_to_formula} one needs the following result:

\begin{theorem}[Theorem~7.3.1 of~\cite{phdLin}] \label{thm:phdLin}
	Let $A$ be an NFA with $n$ states over an alphabet $\Lambda$ of size $\gamma$.
	Then, the Parikh image $\P(L(A))$ is equal to the semi-linear set
	$\bigcup_{i=1}^m \{\vec{b_i}  + \sum_{j=1}^\gamma x_{i, j}\vec{p_{i, j}}  \mid x_{i, j} \in \N\}$
	where $m \leq n^{\gamma^2+3\gamma+3}\gamma^{4\gamma+6}$, $||\vec{b_i}|| \leq n^{3\gamma+3}\gamma^{4\gamma+6}$ and $\vec{p_{i, j}} \in \{0,\dots,n\}^\gamma$.
	Furthermore, $b_i$ and $p_{i, j}$ can be computed in time $2^{\O(\gamma^2\log(\gamma n))}$.
\end{theorem}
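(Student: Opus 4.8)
The plan is to describe $\P(L(A))$ combinatorially via runs decomposed into a simple path plus simple cycles, and then compress that description using a Carath\'eodory-type argument for the periods and small-solution bounds for the bases. Throughout I fix the ordering of the $\gamma$ letters and first reduce to a single accepting state: the general case is the union over the $\le n$ choices of $q_f\in F$, and this factor is absorbed into the exponents of $m$. View $A$ as a directed multigraph on $Q$ with one edge per transition. By the classical Eulerian-trail characterisation, a vector $\vec v\in\N^\gamma$ lies in $\P(L(A))$ (for final state $q_f$) iff there is a $q_0$-to-$q_f$ walk whose transition-multiset has Parikh image $\vec v$, and every such walk decomposes as a simple path $\pi$ from $q_0$ to $q_f$ (of length $<n$) together with closed sub-walks grafted onto vertices of $\pi$, each closed sub-walk being an $\N$-combination of simple cycles (each of length $\le n$). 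Hence $\vec v = \P(\pi)+\sum_C x_C\vec c_C$, where $C$ ranges over simple cycles, $x_C\in\N$ and $\vec c_C=\P(C)\in\{0,\dots,n\}^\gamma$, subject to the reachability constraint that every cycle with $x_C>0$ is reachable from $q_0$ inside the subgraph $\pi\cup\{C: x_C>0\}$; conversely, any such data is realised by an actual run.

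First I would treat the periods. For a fixed skeleton type — a fixed simple path $\pi$ together with a fixed set $\mathcal D$ of simple cycles meeting the reachability constraint — the corresponding contribution $\{\P(\pi)+\sum_{C\in\mathcal D}\vec c_C+\sum_{C\in\mathcal D} y_C\vec c_C : y_C\in\N\}$ is a linear set whose period multiset $\{\vec c_C: C\in\mathcal D\}$ may have size larger than $\gamma$. Using an integer-Carath\'eodory argument, together with the fact that all these period vectors live in the finite box $\{0,\dots,n\}^\gamma$, this set is rewritten as a finite union of linear sets each using at most $\gamma$ periods drawn from $\{0,\dots,n\}^\gamma$; the number of linear sets thereby introduced is bounded by the number of $\le\gamma$-element subsets of $\{0,\dots,n\}^\gamma$, i.e.\ at most $(n+1)^{\gamma^2}$, which is the source of the $n^{\gamma^2}$ factor in $m$. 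Next, the bases: after fixing the period set, each remaining base $\vec b_i$ is the Parikh image of a minimal walk witnessing the type, which is a solution of a system of flow/Kirchhoff linear (in)equalities over the transitions of size polynomial in $|A|$; the standard small-solution bounds for integer linear systems (von zur Gathen--Sieveking, Pottier) then give $\|\vec b_i\|\le n^{\O(\gamma)}\gamma^{\O(\gamma)}$, matching the stated $n^{3\gamma+3}\gamma^{4\gamma+6}$. The final count $m$ and the $2^{\O(\gamma^2\log(\gamma n))}$ time bound follow by bookkeeping: enumerate final states, period sets and the polynomially many relevant base types, and in each case solve the associated linear system.

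The step I expect to be the main obstacle is the reachability constraint: it is not a linear constraint, and handling it naively — by enumerating subgraphs or spanning arborescences of $A$ — would make $m$ exponential in $n$, whereas the statement demands $m$ polynomial in $n$ for fixed $\gamma$. Getting this right requires arguing that only the $\le(n+1)^\gamma$ distinct cycle-Parikh-vectors, together with a short reachability witness, actually matter, and then merging linear sets aggressively so that the count is governed by choices of $\gamma$-subsets of a bounded box rather than by the combinatorics of $A$ itself. This delicate analysis is exactly the one carried out in~\cite{phdLin}, which I would follow, taking care to track the polynomial dependencies on $n$ and $\gamma$ through each of the above steps so that the final exponents come out as stated.
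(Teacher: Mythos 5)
This statement is not proved in the paper at all: it is imported verbatim as Theorem~7.3.1 of~\cite{phdLin} and used as a black box in the proof of Lemma~\ref{lem:one-way_to_formula}, so the only thing to compare your attempt with is the proof in the cited thesis. Your sketch does follow the broad strategy of that line of work (decomposition of accepting runs into a simple path plus simple cycles, reduction to at most $\gamma$ periods taken from $\{0,\dots,n\}^\gamma$, separate bounds on the bases, and a final count), but as written it has two genuine gaps.

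First, the step you call an ``integer-Carath\'eodory argument'' does not deliver the bound you need. Carath\'eodory for \emph{integer} cones (Eisenbrand--Shmonin) only guarantees a generating subset of size roughly $2\gamma\log(4\gamma n)$, not $\gamma$, and $\gamma$ generators alone are provably insufficient for integer cones in general. The way the $\le\gamma$-period bound is actually obtained is by decomposing the \emph{rational} cone spanned by the cycle vectors into simplicial subcones on $\gamma$ of the original generators and then absorbing the integer points of each fundamental parallelepiped into additional base vectors; this determinant-type argument is also the source of the base bound $n^{3\gamma+3}\gamma^{4\gamma+6}$ and of the factor of $m$ beyond the at most $(n+1)^{\gamma^2}$ choices of period sets. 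Your bookkeeping counts only the period subsets and attributes the base bound to generic small-solution results for a flow system, so the stated exponents for $m$ and $\lVert\vec{b}_i\rVert$ are not actually derived. Second --- and you say this yourself --- the reachability/connectivity constraint, which is exactly what threatens to make $m$ exponential in $n$ and is the crux of keeping the count polynomial for fixed $\gamma$, is not handled: you defer it to~\cite{phdLin}, ``which I would follow.'' Deferring the central difficulty to the reference makes the proposal a reasonable roadmap, but not a proof; as a justification of the statement it adds nothing beyond the citation the paper already gives.
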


\begin{proof}[Proof of Lemma~\ref{lem:one-way_to_formula}]
	Let $P = (A, \lambda, \psi)$ be a $\NPA$ of dimension $d$ over $\Sigma$ with the $\NFA$ $A = (Q, \Delta, I, F)$.
	Consider the alphabet $\Lambda \subseteq \Z^d$ the set of vectors occurring on the transitions of $P$ with an arbitrary order defined as $\Lambda = \{ \vec{a}_1, \dots, \vec{a}_\gamma\}$ where $\vec{a}_i \in \range(\lambda) $ and $\gamma = |\range(\lambda)|$.
	We construct the $\NFA$ $A_\lambda$ over $\Lambda$ from $P$ which takes weight vectors as letters instead of $\Sigma$.
	Formally, $A_\lambda = (Q, \Delta_\lambda, I, F)$ such that $(p, \lambda(p, a, q), q) \in \Delta_\lambda$ iff $(p, a, q) \in \Delta$.

	This proof shows the existence of the existential Presburger formula $\varphi(\ell)$ which holds iff $L(P) \neq \varnothing$.
	To do that we consider $\P(L(A_\lambda)) \subseteq \N^\gamma$, the Parikh image of $L(A_\lambda)$, assuming that it can be denoted by the existential Presburger formula $\xi$.
	Indeed, $(\tau_1, \dots, \tau_\gamma) \in \P(L(A_\lambda))$ iff there exists an accepting run $\rho$ of $A_\lambda$ which visits each weight vector $\vec{a}_i \in \Lambda$ exactly $\tau_i$ times.
	Now intuitively, from $\tau_1, \dots, \tau_\gamma$ we are able to recover the tuple computed by $P$ at the end of the run $\rho$ using existential Presburger arithmetic.
	So, in the sequel, $(i)$ we describe how to construct $\xi$ which defines $\P(L(A_\lambda))$ and $(ii)$ from $\xi$ we define the existential Presburger formula $\varphi(\ell)$ which holds iff there exists an accepting run of $P$ of length $\ell$.
	
	$(i)$
	From Theorem~\ref{thm:phdLin} applied on the $\NFA$ $A_\lambda$, there exist $m$ linear sets $L_i = \{  \vec{b}_i  + \sum_{j=1}^\gamma x_{i, j}\vec{p}_{i, j} \mid x_{i, j} \in \N\}$ such that $\P(L(A_\lambda)) = \bigcup_{i=1}^m L_i$.
	The linear set $L_i$ can be denoted by the following existential Presburger formula:
	\begin{equation} \label{eq:parikh_image}
	\xi_i(\vec{\tau}) =
	\exists \vec{x} \left[
		\bigwedge_{k=1}^\gamma
		\proj_k(\vec{\tau}) = \proj_k(\vec{b}_i) + \sum_{j=1}^\gamma \proj_j(\vec{x}) \times \proj_k(\vec{p}_{i, j})
	\right]
	\end{equation}
	Note that, $\vec{b}_i$ and $\vec{p}_{i, j}$ depends on $P$ only and can be computed in time $2^{\O(\gamma^2\log(\gamma n))}$.
	Also, Theorem~\ref{thm:phdLin} ensures that $m$ is at most $n^{\gamma^2+3\gamma+3}\gamma^{4\gamma+6}$.
	Moreover, for all $1 \leq i \leq m$, all $1 \leq j \leq \gamma$, we have that $\proj_{k}(\vec{b}_i) \leq n^{3\gamma+3}\gamma^{4\gamma+6}$ and $\vec{p}_{i, j} \in \{0,\dots,n\}^\gamma$.
	Since constants are encoded in binary, we have that $|\xi_i|$ is polynomial in $\gamma$ and logarithmic in $n$.
	
	$(ii)$
	Now, we explain how $\xi$ and the acceptance constraint\footnote{Lemma~\ref{lem:one-way_to_formula} $\psi$ can be trivially be generalised with an acceptance constraint which belongs to $\PA(\alpha, \beta)$ for some $\alpha, \beta \in \N$} of $P$ are glued together.
	Recall that $\Lambda = \{ \vec{a}_1, \dots, \vec{a}_\gamma \}$ where $\vec{a}_i \in \Z^d$ is the set of weight vectors of the original Parikh automaton $P$.
	The value of each dimension $1 \leq k \leq d$ at the end of a run can be computed from the number of visits $\tau_1, \dots, \tau_\gamma$ by $c_k = \sum_{j=1}^\gamma \tau_j \times \proj_k(\vec{a}_j)$ and the number of transition taken is $\ell = \sum_{j=1}^{\gamma} \tau_j$.
	Then, we define the formula $\varphi_i$ as follows.
	\begin{equation}
	\varphi_i(\ell) = 
	\exists \vec{\tau}, \exists \vec{c}
	\left[ \bigwedge
	\begin{cases}
		\xi_i(\vec{\tau}) \land \psi(\vec{c}) \land \ell = \sum_{j=1}^{\gamma} \tau_j
	\smallskip\\
		\bigwedge_{k=1}^d \proj_k(\vec{c}) = \sum_{j=1}^\gamma \proj_j(\vec{\tau}) \times \proj_k(\vec{a}_j)
	\end{cases}
	\right]
	\end{equation}
	We have that $|\varphi_i| = \O(|\xi_i|+|\psi|+ \gamma + d\gamma\log_2(\mu))$ where $\mu$ is the maximal absolute value appearing on weight vectors of $P$ i.e.\ $\mu = \max \{ \lVert \vec{a}_i \rVert \mid 1 \leq i \leq \gamma \}$.
	Recall that $|P| = \O(n + |\psi| + (d \log_2(\mu+1) + n^2)\times \gamma)$.
	Thus, $|\varphi_i|$ is polynomial in $|P|$, logarithmic in $n$ and can be computed in time $2^{\O(\gamma^2\log(\gamma n))}$.
\end{proof}

\section{Section~\ref{sec:closure}: Closure properties and comparison problems}

\begin{proof}[Proof of Theorem~\ref{thm:membership}]
	Given a $\twoNPA$ $P = (A, \lambda, \psi)$ with $A = (Q, Q_I, Q_F, \Delta)$ and a word $w \in \Sigma^*$, we construct an $\NPA$ $P_w$ such that $w \in L(P)$ iff $L(P_w) \neq \varnothing$.
	Intuitively, each state of $P_w$ encodes a configuration that appears in a run of $P$ on input $w$.
	We define $P_w = (A_w, \lambda', \psi)$ with $A_w = (Q', Q_I', Q_F', \Delta')$ where $Q' = \{(w_1, q, w_2) \mid q \in Q, w_1 w_2 =\lword w \rword\}$, and $Q_I' = \{(\varepsilon, q_0, \lword w \rword) \mid q_0 \in Q\}$, $Q_F' = \{(w, q_f, \varepsilon) \mid q_f \in F\}$ and $\lambda'$ is defined as following partial function for which $\Delta'$ is the domain:
	$$
		\bigcup \begin{cases}
			\left\{ \big((w_1, q_1, a w_2),a, (w_1 a, q_2, w_2)\big) \mapsto \vec{v} \mid
				(q_1, a, q_2) \in \Delta \land \lambda(q_1, a, q_2) = \vec{v} \land q_1 \in Q^\onright
			\right\}
		\smallskip\\
			\left\{ \big((w_1 a, q_1, w_2),a,(w_1, q_2, a w_2)\big) \mapsto \vec{v} \mid
				(q_1, a, q_2) \in \Delta \land \lambda(q_1, a, q_2) = \vec{v} \land q_1 \in Q^\onleft
			\right\}
		\end{cases}
	$$
	Note that a run of $P_w$ is a (one-way) $\NPA$ which simulates the sequence of configurations corresponding to a run of $P$ on input $w$, hence we have $L(P_w) \neq \varnothing$ iff $w\in L(P)$. 
	Non-emptiness and membership are shown to be \NPTimeC for $\NPA$ in~\cite{DBLP:conf/lics/FigueiraL15} which yields the statement.
\end{proof}

\end{document}